\documentclass[11pt]{article}

\usepackage[final]{acl}

\usepackage{times}
\usepackage{latexsym}
\usepackage{amsmath}
\usepackage{cleveref}
\usepackage{amssymb}
\usepackage{amsthm}
\usepackage{subcaption}

\usepackage[T1]{fontenc}

\usepackage[utf8]{inputenc}

\usepackage{microtype}

\usepackage{inconsolata}

\usepackage{graphicx}

\usepackage[ruled,vlined]{algorithm2e}
\usepackage{bm}
\usepackage{longtable}
\usepackage{booktabs}
\usepackage{subcaption}
\usepackage{dsfont}
\usepackage{paralist}

\newtheorem{assumption}{Assumption}

\newtheorem{lemma}{Lemma}

\Crefname{section}{Sec.}{Secs.}
\Crefname{figure}{Fig.}{Figs.}
\Crefname{equation}{Eq.}{Eqs.}
\Crefname{table}{Tab.}{Tabs.}

\usepackage{xcolor,colortbl}
\usepackage{pifont}         
\usepackage{color, soul}
\usepackage{circledtext}
\usepackage{circledsteps}
\definecolor{myorange}{RGB}{230, 159, 0}
\definecolor{mygreen}{RGB}{0, 158, 115}
\definecolor{myred}{RGB}{222,48,47}
\definecolor{myblue}{RGB}{52,89,156}

\circledtextset{resize=real}

\usepackage{enumitem}
\usepackage{adjustbox}
\usepackage{listings}

\newcommand{\hfdataset}{
\href{https://huggingface.co/datasets/sumin-yu/PopResume}{
\raisebox{-0.2em}{\includegraphics[height=1.2em]{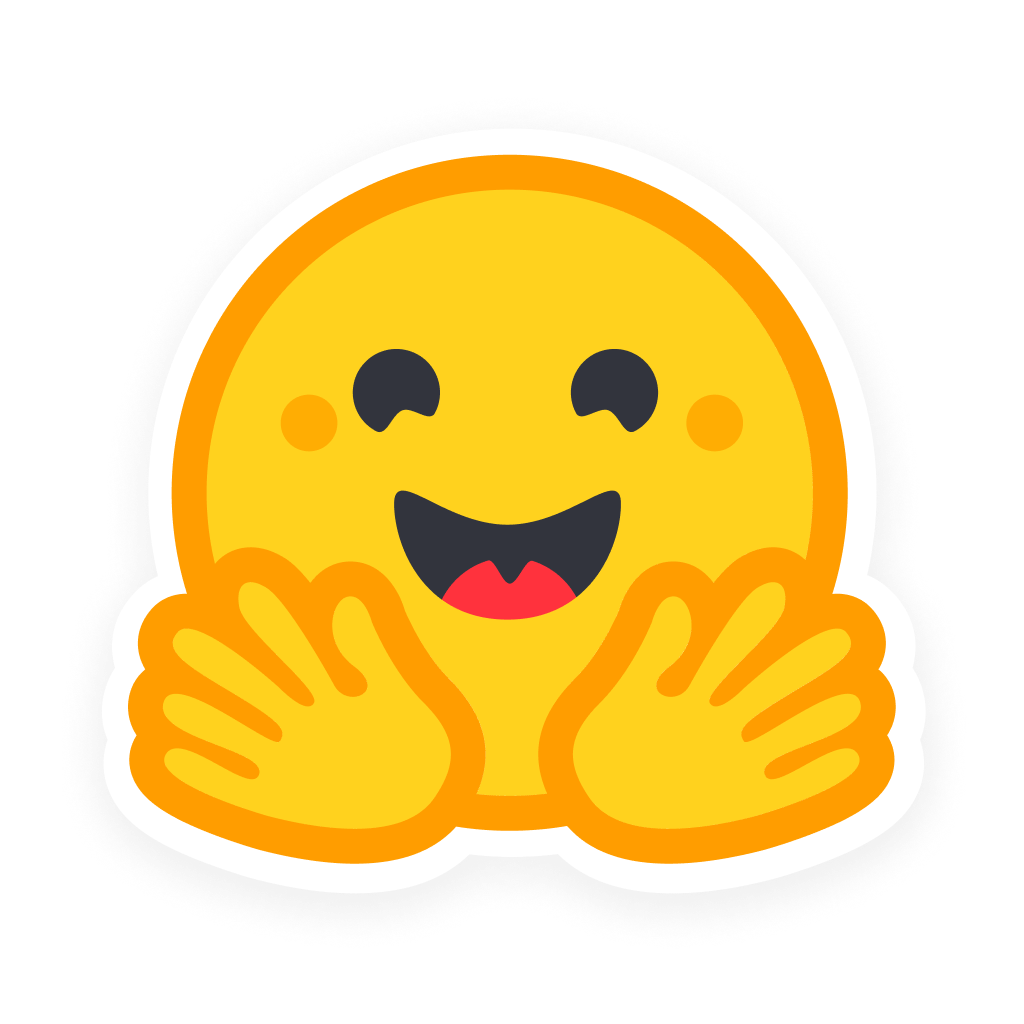}}
\;\texttt{sumin-yu/PopResume}
}}

\newcommand{\githubrepo}{
\href{https://github.com/sumin-yu/PopResume}{
\raisebox{-0.18em}{
\includegraphics[height=1.0em]{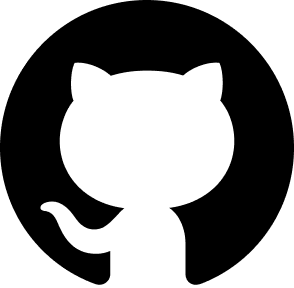}}
\;\texttt{sumin-yu/PopResume}
}}
\title{
PopResume: Causal Fairness Evaluation of LLM/VLM Resume Screeners\\with Population-Representative Dataset}

\author{Sumin Yu$^{1}$\thanks{Equal contribution.}, Juhyeon Park$^{2}$\footnotemark[1], Taesup Moon$^{1,2,3}$\thanks{Corresponding author.} \\
$^{1}$ECE, Seoul National University,
$^{2}$IPAI, Seoul National University, \\ 
$^{3}$ASRI / INMC / AIIS, Seoul National University\\
\texttt{\{ysmsoomin, parkjh9229, tsmoon\}@snu.ac.kr} \\
}

\begin{document}
\maketitle
\begin{abstract}
We present PopResume, a population-representative resume dataset for causal fairness auditing of LLM- and VLM-based resume screening systems. Unlike existing benchmarks that rely on manually injected demographic information and outcome-level disparities, PopResume is grounded in population statistics and preserves natural attribute relationships, enabling path-specific effect (PSE)-based fairness evaluation.
We decompose the effect of a protected attribute on resume scores into two paths: the business necessity path, mediated by job-relevant qualifications, and the redlining path, mediated by demographic proxies. This distinction allows auditors to separate legally permissible from impermissible sources of disparity. Evaluating four LLMs and four VLMs on PopResume's 60.8K resumes across five occupations, we identify five representative discrimination patterns that aggregate metrics fail to capture. Our results demonstrate that PSE-based evaluation reveals fairness issues masked by outcome-level measures, underscoring the need for causally-grounded auditing frameworks in AI-assisted hiring. Our code is available at \githubrepo and dataset is available at \hfdataset.
\end{abstract}

\section{Introduction}
\label{sec:intro}
The use of large language models (LLMs) in resume screening and hiring decisions is expanding rapidly~\cite{lo2025ai,gan2024application}, raising growing concerns regarding fairness and discrimination~\cite{cherepanova2025improving,wilson2024gender,wilson2025no}.
Given the potential for automated decisions to produce systematic disparities that disadvantage individuals based on protected attributes such as gender or race,
organizations deploying automated hiring systems are now expected to demonstrate compliance with anti-discrimination law~\cite{TitleVII1964,EEOC_UGESP_QA}.

\begin{figure}[t]
\centering
\includegraphics[width=\linewidth]{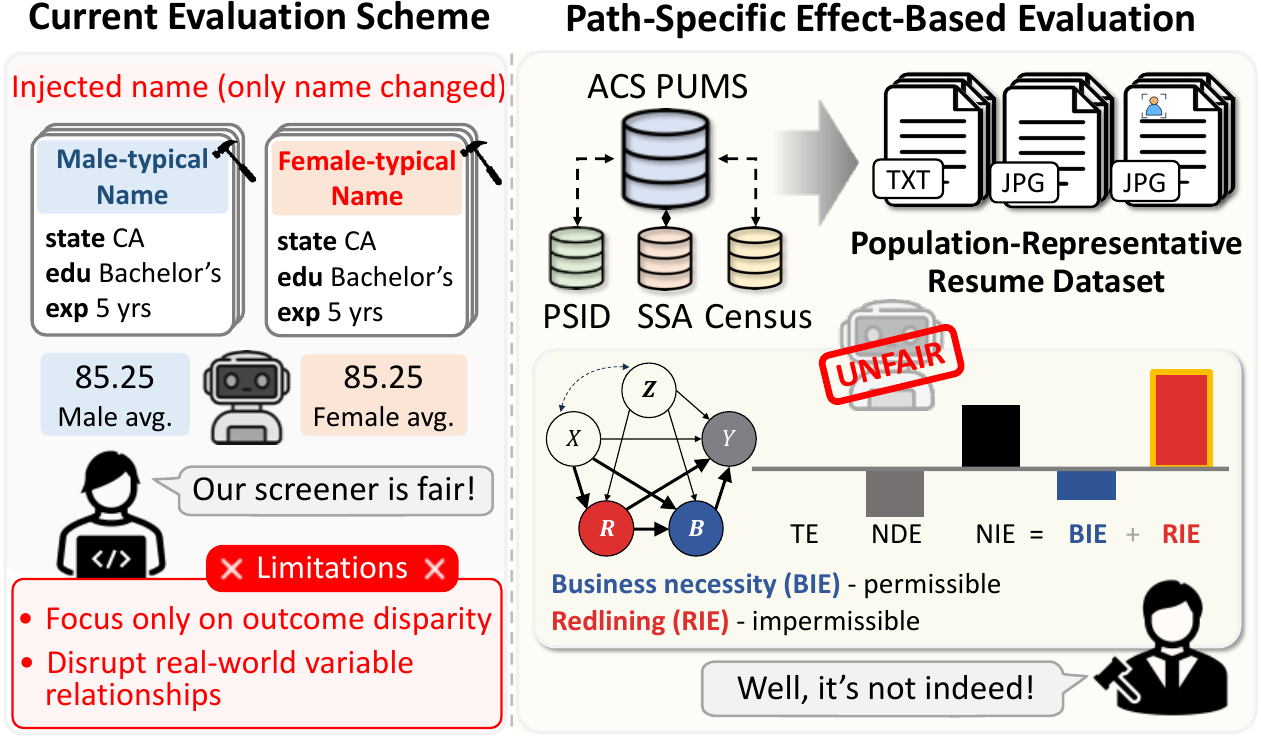}

\caption{
Prior works inject protected attributes, making causal framework–based evaluation infeasible, and measure outcome disparity; our population-representative resume dataset instead enables causal effect–based evaluation and pathway decomposition.
}
\label{fig:teaser}
\end{figure}

In actual audit and compliance environments, group-level outcome disparities are often identified using simple statistical criteria such as selection rate comparisons~\cite{EEOC_UGESP_QA}.
While these metrics serve as useful \textit{early} indicators of potential risk,
they \textit{do not explain the mechanisms behind the disparities}. This limitation is critical, as legal liability ultimately depends not only on the presence of disparity, but also on how it arises.
Recent studies investigating bias in LLM-based hiring systems \cite{iso2025evaluating,hu2025fairwork, wang2024jobfair} share the same limitation: they primarily focus on statistical disparities between demographic groups, without identifying the underlying causal pathways through which protected attributes influence hiring outcomes, as illustrated in \Cref{fig:teaser} (left).
However, legal and policy frameworks around employment discrimination hinge on a crucial distinction: disparities arising through legitimate job-related pathways, \textit{e.g.,} role-essential qualifications, may be permissible (\textit{business necessity}), whereas disparities driven by proxy variables such as name or address that can disadvantage specific protected groups are unacceptable (\textit{redlining}).
Therefore, meaningful fairness auditing requires understanding not just the magnitude of outcome disparities, but the causal pathways through which protected attributes produce them.

Path-specific effect (PSE)-based fairness frameworks~\cite{plecko2024causal,pearl2009causality, chiappa2019path} offer a principled approach to this challenge, enabling answering \textit{how} and \textit{through which pathways} protected attributes contribute to observed disparities.
However, enabling such pathway-level analysis requires data with explicit control over all relevant attributes, such as protected attributes, legitimate qualifications, and proxy variables, simultaneously.
Existing benchmarks for LLM-based hiring systems fail to meet this requirement in two distinct ways.
Real-world resume datasets~\cite{job-recommendation, senger2025karrierewege} typically do not contain ground-truth demographic information about applicants, as such attributes are withheld due to privacy concerns.
Current evaluation benchmarks~\cite{hu2025fairwork, wang2024jobfair, iso2025evaluating, armstrong2024silicon, wilson2024gender, nghiem2024you}, on the other hand, often manually inject and perturb demographic information, which can disrupt the realistic relationships between attributes, making causal framework-based evaluation infeasible.

We address these gaps by introducing legally-grounded causal fairness auditing framework for LLM- and vision language model (VLM)-based resume screening systems.
Building on the PSE framework, we decompose the causal effect of a protected attribute on resume scores into two pathways: (1) the \textit{business necessity} path, mediated by job-relevant qualifications (education, work experience), and (2) the \textit{redlining} path, mediated by demographic proxies (name, address). This decomposition enables auditors to identify not only that a hiring system discriminates, but how and whether the discrimination is legally defensible.
To support this analysis under realistic conditions,
we construct \textbf{PopResume}, a dataset derived from U.S. population-level statistics (\textit{e.g.,} ACS PUMS~\cite{acs_pums}) that preserves the natural relationships among attributes and enables causal effect estimation, which contains 60.8k resumes across five occupations.

Our contributions are as follows.
\begin{itemize}[nosep,leftmargin=1em,labelwidth=*,align=left]
  \sloppy
    \item We propose a PSE-based fairness auditing framework for LLM/VLM-based resume screening systems that separates legally permissible outcome disparities mediated by job-relevant qualifications from legally impermissible ones transmitted through demographic proxies.
    \item We construct a large-scale resume dataset grounded in U.S. population statistics,
    preserving realistic distribution and
    enabling causal effect-based fairness evaluation.
\end{itemize}

\section{Preliminaries}
\label{sec:prelim}
First, we review the legal foundations of employment discrimination law that motivate our fairness criteria (\Cref{subsec:legal-background}), and then introduce the causal modeling framework and PSEs that underpin our analysis (\Cref{subsec:sfm}).

\subsection{Legal Background}
\label{subsec:legal-background}

Title VII of the US Civil Rights Act~\cite{TitleVII1964} prohibits employment discrimination based on race, color, religion, sex, and national origin, which are referred to as \textit{protected attributes}.
The legal theory typically applied to automated decisions is the \textit{disparate impact} doctrine~\cite{jones2022disparateimpact,barocas2016disparateimpact},
which holds that even a seemingly neutral practice can be unlawful if it produces outcomes that disproportionately disadvantage a protected group.
To establish disparate impact, plaintiffs must first demonstrate a disparity in outcomes between groups, commonly assessed using statistical heuristics such as the Four-Fifths rule.
This step identifies the \textit{presence} of a disparity but does not determine its causal origin.
Critically, however, disparate impact analysis does not end with identifying statistical disparities.
Under the burden-shifting framework established by Title VII,
employers may justify the observed disparity by demonstrating that it arises from job-related criteria consistent with \textit{business necessity}, recognizing that disparities mediated through legitimate, job-relevant factors, such as professional qualifications or required skills, are permissible.
Accordingly, legal liability hinges on a fundamentally causal question:
whether the observed disparity arises through permissible pathways (\textit{business necessity}) or impermissible ones (\textit{redlining}).
This motivates the PSE-based analysis that can distinguish between these two types of effects.
 
\begin{figure}[t]
\centering

\begin{subfigure}{0.45\linewidth}
    \centering
    \includegraphics[width=\linewidth]{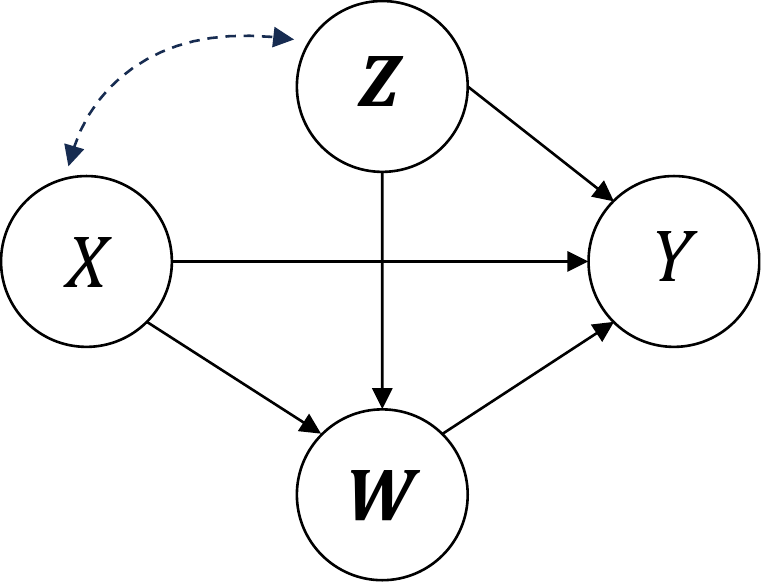}
    \caption{Standard Fairness Model (SFM).}
    \label{fig:causal_diagram}
\end{subfigure}
\hfill
\begin{subfigure}{0.45\linewidth}
    \centering
    \includegraphics[width=\linewidth]{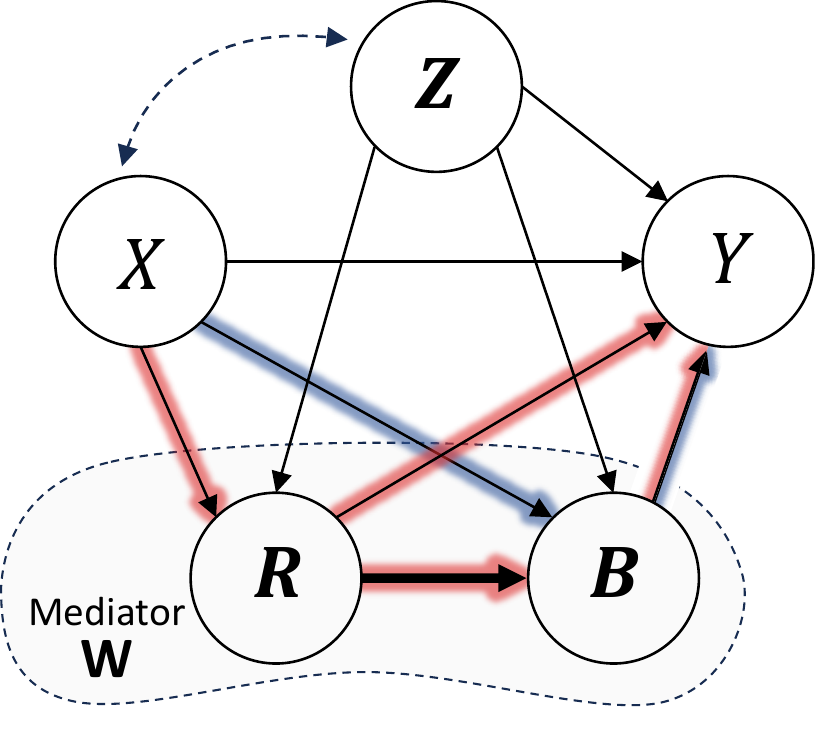}
    \caption{Mediator decomposition in our framework.}
    \label{fig:causal_diagram_ours}
\end{subfigure}
\caption{
In our framework, the mediator set $\mathbf{W}$ is decomposed into business necessity components $\mathbf{B}$ and redlining-related components $\mathbf{R}$.
{
Paths from $X$ to $Y$ through $\mathbf{B}$ (and not through $\mathbf{R}$) constitute the \textcolor{myblue}{BIE}, while all paths passing through $\mathbf{R}$ constitute the \textcolor{myred}{RIE}.
}
}
\end{figure}

\subsection{Standard Fairness Model \& Causal Effects}
\label{subsec:sfm}

We use notations $(\mathbf{X}, X, \mathbf{x},x)$ to denote random vectors, random variables, and corresponding realizations, respectively.

\noindent \textbf{Standard Fairness Model (SFM).}
To formalize the causal relationships between protected attributes and hiring outcomes, 
we build on the structural causal model (SCM) framework~\cite{pearl2009causality}.
An SCM induces a causal graph $\mathcal{G}$, where nodes correspond to variables $\mathbf{V}$, directed edges represent causal relationships, and dashed bi-directed edges indicate hidden confounders.

Following ~\citet{plecko2024causal}, we adopt the standard fairness model (SFM) with $\mathbf{V}=\{X,\mathbf{Z},\mathbf{W},Y\}$.
SFM partitions variables into four groups: a protected attribute $X$, an outcome $Y$, confounders $\mathbf{Z}$ that are not causally affected by $X$, and mediators $\mathbf{W}$ that may be causally influenced by $X$, as depicted in \Cref{fig:causal_diagram}. This abstraction allows fairness analysis without requiring the complete causal graph to be specified, while preserving the causal pathways through which the protected attribute $X$ may influence the outcome $Y$.

\noindent \textbf{Path-specific Effects (PSE).}
Under the SFM, the influence of the protected attribute ${X}$ on the outcome ${Y}$ can propagate through multiple pathways.
To quantify these effects, we adopt the potential outcome ~\cite{pearl2009causality}.
Given variables ${X}, {Y} \subseteq \mathbf{V}$, the \textit{average potential outcome} $\mathbb{E}[{Y}_{{x}}]$ denotes the expected value of ${Y}$ if $\mathbf{X}$ were set to ${x}$ by intervention.
\textit{Total effect} (TE) of changing $X$ from $x_0$ to $x_1$ is defined as
\begin{equation}
    \text{TE}(x_1, x_0) \triangleq \mathbb{E}[Y_{x_1}] - \mathbb{E}[Y_{x_0}], 
    \label{eq:te}
\end{equation}
capturing the overall causal effect of $X$ on $Y$ through all causal pathways.
TE can be decomposed into the \textit{natural direct effect} (NDE) and the \textit{natural indirect effect} (NIE).
NDE captures the portion of the effect of $X$ on $Y$ that is \textit{not} mediated through the mediators $\mathbf{W}$,
\begin{equation}
    \text{NDE}(x_1, x_0) \triangleq \mathbb{E}[Y_{x_1, \mathbf{W}_{x_0}}] - \mathbb{E}[Y_{x_0}],
    \label{eq:nde}
\end{equation}
while NIE captures the effect transmitted through the mediators $\mathbf{W}$,
\begin{equation}
    \text{NIE}(x_1, x_0) \triangleq \mathbb{E}[Y_{x_1}] - \mathbb{E}[Y_{x_1, \mathbf{W}_{x_0}}].
    \label{eq:nie}
\end{equation}

\section{PSE-based Fairness Evaluation}
\label{sec:pse_based_fair_eval}

\subsection{Mediator Decomposition}
As discussed in \Cref{subsec:legal-background}, legal doctrines such as \textit{redlining} and \textit{business necessity} require distinguishing the causal pathways through which protected attributes influence hiring outcomes.
To capture this distinction within the causal framework, we decompose the mediator set $\mathbf{W}$ into two components: $\mathbf{B}$, capturing job-relevant qualifications associated with \textit{business necessity}, and $\mathbf{R}$, capturing demographic proxies associated with \textit{redlining},
as illustrated in~\Cref{fig:causal_diagram_ours}\footnote{While this decomposition is inspired by legal doctrines in employment discrimination, the normative interpretation of these pathways depends on legal and societal contexts.}.
Based on this decomposition, we introduce two PSEs, termed  the \textit{$\mathbf{B}$-specific indirect effect} (BIE), and the \textit{$\mathbf{R}$-specific indirect effect} (RIE).
Intuitively, BIE captures the portion of the causal effect transmitted through $\mathbf{B}$, while RIE captures the portion transmitted through $\mathbf{R}$, corresponding to the pathways highlighted by the blue and red edges in \Cref{fig:causal_diagram_ours}.
Because $\mathbf{B}$ and $\mathbf{R}$ may causally influence each other, the \textit{order of interventions} on these mediators is not uniquely defined.
Specifically,
when considering BIE,
two intervention orders are possible:
(i) fixing $\mathbf{R}$ at $\mathbf{R}_{x_0}$ and intervening on $\mathbf{B}$, or (ii) first intervening on $\mathbf{R}$ from $\mathbf{R}_{x_0}$ to $\mathbf{R}_{x_1}$ and then intervening on $\mathbf{B}$.
Therefore, we adopt a symmetric formulation that averages over the two possible mediator intervention orders.
Formally, BIE is defined as
\begin{equation}
\begin{aligned}
\text{BIE}&(x_1,x_0)
\triangleq
\frac{1}{2}\Big[
\underbrace{
\mathbb{E}[Y_{x_1}]-\mathbb{E}[Y_{x_1,\mathbf{B}_{x_0,\mathbf{R}_{x_1}}, \mathbf{R}_{x_1}}]}_{\text{\scriptsize (ii)}} \\
& +
\underbrace{
\mathbb{E}[Y_{x_1,\mathbf{B}_{x_1,\mathbf{R}_{x_0}}, \mathbf{R}_{x_0}}] - \mathbb{E}[Y_{x_1, \mathbf{B}_{x_0}, \mathbf{R}_{x_0}}]
}_{\text{\scriptsize (i)}}
\Big].
\label{eq:bie}
\end{aligned}    
\end{equation}

Similarly, RIE is defined as
\begin{equation}
\begin{aligned}
\text{RIE}&(x_1,x_0)
\triangleq 
\frac{1}{2}\Big[
\mathbb{E}[Y_{x_1}]
-
\mathbb{E}[Y_{x_1,\mathbf{B}_{x_1,\mathbf{R}_{x_0}}, \mathbf{R}_{x_0}}]
 \\
& +
\mathbb{E}[Y_{x_1,\mathbf{B}_{x_0,\mathbf{R}_{x_1}}, \mathbf{R}_{x_1}}]
-
\mathbb{E}[Y_{x_1,\mathbf{B}_{x_0}, \mathbf{R}_{x_0}}]
\Big].
\label{eq:rie}
\end{aligned}
\end{equation}

By construction, NIE is decomposed into the sum of BIE and RIE, \textit{i.e.,} $\text{NIE}(x_1,x_0) = \text{BIE}(x_1,x_0) + \text{RIE}(x_1,x_0)$.

\subsection{PSE Estimation}
\label{sec:causal_effect_estimation}
\begin{table}[t]
\centering
\small
\resizebox{\columnwidth}{!}{
\begin{tabular}{cll}
\toprule
& $\mu^i$& $\pi^i$ \\
\midrule
$i=3$ & $\mathbb{E}[Y|\mathbf{B,R},X,\mathbf{Z}]$ & $\frac{P(X=x_0|\mathbf{R,Z})}{P(X|\mathbf{R,Z})} \frac{\mathds{1}[X={x_1}]}{P(X=x_0|\mathbf{Z})}$ \\
$i=2$ & $\mathbb{E}[\mu^3(\mathbf{B,R},x_1,\mathbf{Z})|\mathbf{R},X,\mathbf{Z}]$ & $\frac{P(X=x_0|\mathbf{R,Z})}{P(X|\mathbf{R,Z})} \frac{\mathds{1}[X={x_1}]}{P(X=x_0|\mathbf{Z})}$ \\
$i=1$ & $\mathbb{E}[\mu^2(\mathbf{R},x_1,\mathbf{Z})|X,\mathbf{Z}]$ & $\frac{\mathds{1}[X={x_0}]}{P(X|\mathbf{Z})}$ \\
\bottomrule
\end{tabular}
}
\caption{Definition of nuisance parameters.}
\label{tab:nuisance_parameters}
\end{table}
\noindent \textbf{Estimation of PSE.}
To estimate the PSEs defined in~\Cref{eq:te,eq:nde,eq:nie,eq:rie,eq:bie},
we need to estimate five quantities from the given data,
$\mathbb{E}[Y_{x_0}],\  \mathbb{E}[Y_{x_1}],\  \mathbb{E}[Y_{x_1, \mathbf{W}_{x_0}}], \ \mathbb{E}[Y_{x_1,\mathbf{B}_{x_0,\mathbf{R}_{x_1}}, \mathbf{R}_{x_1}}]$, and $\mathbb{E}[Y_{x_1,\mathbf{B}_{x_1,\mathbf{R}_{x_0}}, \mathbf{R}_{x_0}}]$.
The estimation of the first four quantities 
are already discussed by \citet{zhang2025path}. Therefore, we detail the estimation of $\mathbb{E}[Y_{x_1,\mathbf{B}_{x_1,\mathbf{R}_{x_0}}, \mathbf{R}_{x_0}}]$, firstly introduced in our work. Following the DML-UCA algorithm of \citet{NEURIPS2024_0c4bc137}, a doubly robust estimator with finite sample guarantee can be constructed as follows.

\begin{lemma}[DML-UCA, \cite{NEURIPS2024_0c4bc137}]
Given a sample $\mathcal{D} \overset{i.i.d.}{\sim} P(\mathbf{V})$, doubly robust estimator $\hat{\psi}$ for $\mathbb{E}[Y_{x_1,\mathbf{B}_{x_1,\mathbf{R}_{x_0}}, \mathbf{R}_{x_0}}]$ constructed with following procedure has finite sample guarantee:
\begin{enumerate}[nosep,leftmargin=1em,labelwidth=*,align=left]
\item
Take any $L$-fold random partition of the dataset 
$\mathcal{D} \triangleq (\mathbf{V}_1, \ldots, \mathbf{V}_n)$,
$\mathcal{D} = \bigcup_{\ell=1}^{L} \mathcal{D}_\ell$
where $|\mathcal{D}_\ell| = n/L$.
\item
For each $\ell = 1,2,\ldots,L$, construct nuisance parameter estimators 
$\hat{\bm{\mu}}=(\hat{\mu^1},\hat{\mu^2},\hat{\mu^3})$ and 
$\hat{\bm{\pi}}=(\hat{\pi^1},\hat{\pi^2},\hat{\pi^3})$ using 
$\mathcal{D} \setminus \mathcal{D}_\ell$, and compute
$\hat{\psi}_\ell \triangleq \mathbb{E}_{\mathcal{D}_\ell} \varphi(\mathbf{V};\hat{\bm{\mu}}, \hat{\bm{\pi}}),$
where $\bm{\mu}, \bm{\pi}$ are outlined in \Cref{tab:nuisance_parameters} and
\begin{equation*}
\begin{aligned}
&\varphi(\mathbf{V};\hat{\bm{\mu}},\hat{\bm{\pi}})
\triangleq
\hat{\pi}^3\{Y-\hat{\mu}^3(\mathbf{B,R},X,\mathbf{Z})\} \\
&+ \hat{\pi}^2\{\hat{\mu}^3(\mathbf{B,R},x_1,\mathbf{Z})-\hat{\mu}^2(\mathbf{R},X,\mathbf{Z})\} \\
&+ \hat{\pi}^1\{\hat{\mu}^2(\mathbf{R},x_1,\mathbf{Z})-\hat{\mu}^1(X,\mathbf{Z})\} + \hat{\mu}^1(x_0,\mathbf{Z}).
\end{aligned}
\end{equation*}

\item
The estimator $\hat{\psi}$ is the average of 
$\{\hat{\psi}_\ell\}_{\ell=1}^{L}$,
$
\hat{\psi} 
\triangleq 
\frac{1}{L}(\hat{\psi}_1 + \cdots + \hat{\psi}_L).
$
\end{enumerate}
\end{lemma}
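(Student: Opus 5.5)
The plan has two stages. First, identify the target $\psi \triangleq \mathbb{E}[Y_{x_1,\mathbf{B}_{x_1,\mathbf{R}_{x_0}},\mathbf{R}_{x_0}}]$ as a nested (sequential) regression functional. Second, show that the cross-fitted estimator built from $\varphi$ is a one-step estimator whose bias is a sum of products of nuisance errors, and then invoke the finite-sample result of DML-UCA~\cite{NEURIPS2024_0c4bc137}.

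\textbf{Identification.} Under the SFM of \Cref{fig:causal_diagram_ours}, assume no unobserved confounding among $(X,\mathbf{R},\mathbf{B},Y)$ given $\mathbf{Z}$, together with positivity. Composition and the mediator ordering $\mathbf{R}\to\mathbf{B}$ then give
\begin{equation*}
\psi = \mathbb{E}_{\mathbf{Z}}\Big[\sum_{\mathbf{r}} P(\mathbf{r}\mid x_0,\mathbf{Z})\sum_{\mathbf{b}} P(\mathbf{b}\mid \mathbf{r},x_1,\mathbf{Z})\,\mathbb{E}[Y\mid \mathbf{b},\mathbf{r},x_1,\mathbf{Z}]\Big].
\end{equation*}
By the definitions of \Cref{tab:nuisance_parameters}, this equals $\mathbb{E}[\mu^1(x_0,\mathbf{Z})]$. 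Here $\mu^3$ is the outcome regression, $\mu^2$ integrates $\mu^3(\cdot,x_1,\cdot)$ over $\mathbf{B}\mid\mathbf{R},X$, and $\mu^1$ integrates $\mu^2(\cdot,x_1,\cdot)$ over $\mathbf{R}\mid X$. Evaluating at $x_0$ in the outermost step and at $x_1$ in the inner steps matches the counterfactual's intervention pattern.

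\textbf{Influence function and weights.} Next, compute the efficient influence function of this three-stage functional. It should be $\varphi(\mathbf{V};\bm{\mu},\bm{\pi})-\psi$, with each stage's residual weighted by the density ratio that transports the stage's conditioning distribution to the target one:
\begin{itemize}
\item Stage 1 uses $\pi^1=\mathds{1}[X=x_0]/P(X\mid\mathbf{Z})$.
\item Stages 2 and 3 use the ratio that reweights from $X=x_1$ to $X=x_0$ on the $\mathbf{R}$-marginal only. By Bayes' rule this is
\begin{equation*}
\frac{P(\mathbf{R}\mid x_0,\mathbf{Z})}{P(\mathbf{R}\mid x_1,\mathbf{Z})}\cdot\frac{\mathds{1}[X=x_1]}{P(x_1\mid\mathbf{Z})},
\end{equation*}
which matches $\pi^2=\pi^3$ in the table.
\end{itemize}
I would verify this by checking the mean-zero and Neyman-orthogonality properties directly: take conditional expectations stage by stage, where the $\mu^3$ residual vanishes given $(\mathbf{B},\mathbf{R},X,\mathbf{Z})$, and so on.

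\textbf{Error decomposition and finite-sample bound.} Write
\begin{equation*}
\hat\psi-\psi=(\mathbb{E}_{\mathcal{D}_\ell}-\mathbb{E})[\varphi(\cdot;\bm\mu,\bm\pi)] + R_1 + R_2,
\end{equation*}
where:
\begin{itemize}
\item $R_1$ is an empirical-process term. Because of cross-fitting, the nuisances are independent of $\mathcal{D}_\ell$, so a conditional Chebyshev (or Bernstein) bound controls $R_1$ by $\|\hat\varphi-\varphi\|_2/\sqrt{n/L}$.
\item $R_2=\mathbb{E}[\varphi(\cdot;\hat{\bm\mu},\hat{\bm\pi})]-\psi$ is the conditional bias.
\end{itemize}
The main obstacle is showing that $R_2$ has the product form $\sum_{i}\|\hat\mu^i-\mu^i\|\,\|\hat\pi^i-\pi^i\|$, up to constants that depend on bounds on the weights. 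The danger is that the nested pseudo-outcomes $\hat\mu^3(\cdot,x_1,\cdot)$ and $\hat\mu^2(\cdot,x_1,\cdot)$ might introduce cross-stage first-order terms. To handle this, I would telescope stage by stage. I would add and subtract $\pi^i$, then use the fact that $\mathbb{E}[\pi^i\{\hat\mu^{i+1}-\hat\mu^{i}\}]$ cancels against the next stage's plug-in term up to $(\hat\pi^i-\pi^i)(\hat\mu^i-\mu^i)$. This is exactly the argument in DML-UCA for general sequential backdoor (UCA) functionals. It therefore suffices to show that $\psi$ fits the UCA form, which the identification step has done.

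Averaging over the $L$ folds and applying a union bound yields a non-asymptotic bound of the form
\begin{equation*}
|\hat\psi-\psi|\lesssim \sqrt{\tfrac{L}{n\delta}}+\sum_i\|\hat\mu^i-\mu^i\|\|\hat\pi^i-\pi^i\|
\end{equation*}
with probability at least $1-\delta$. This is the claimed doubly robust finite-sample guarantee.
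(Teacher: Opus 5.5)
Your proposal is correct and follows essentially the same route as the paper. It identifies the target as $\mathbb{E}[\mu^1(x_0,\mathbf{Z})]$, checks that $\varphi$ has mean $\psi$ using the Bayes-rule form of $\pi^2=\pi^3$, obtains a product-form bias by stage-wise telescoping, and inherits the finite-sample bound from DML-UCA (the paper cites Theorem~2 of \citet{zhang2025path}). The one step the paper spells out that you delegate to DML-UCA is the pair of cross-stage identities $\mathbb{E}[\pi^3\hat\mu^3]=\mathbb{E}[\pi^2\hat\mu^2_*]$ and $\mathbb{E}[\pi^2\hat\mu^2]=\mathbb{E}[\pi^1\hat\mu^1_*]$, with $\hat\mu^2_*=\mathbb{E}[\hat\mu^3(\mathbf{B},\mathbf{R},x_1,\mathbf{Z})\mid\mathbf{R},X,\mathbf{Z}]$ and $\hat\mu^1_*=\mathbb{E}[\hat\mu^2(\mathbf{R},x_1,\mathbf{Z})\mid X,\mathbf{Z}]$; these are what make your telescoping exact, and they show that, strictly, the remainder for $i=1,2$ pairs $\hat\pi^i-\pi^i$ with $\hat\mu^i_*-\hat\mu^i$ rather than with $\mu^i-\hat\mu^i$ (the paper's stated bound uses the same looser notation).
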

Mathematical details are provided in \Cref{sec:app_identi,sec:app_paramet,sec:app_doubly_robust}, and the finite sample guarantee is inherited from Theorem~2 in \citet{zhang2025path}. We estimate the nuisance parameters using XGBoost \cite{chen2016xgboost}; corresponding hyperparameter search details are provided in \Cref{sec:hparam}.

\noindent \textbf{Grouping of Categorical Variables.}
The estimation process 
requires
estimating $\mathbb{E}[Y|\cdot]$ and $P(X|\cdot)$.
However, in practice, some discrete variables may have a large number of categories, which can lead to unstable estimation when rare categories are unevenly represented across partitions.
To address this issue, we group the categories of such variables into super-categories, reducing the number of categories and enabling more reliable estimation. The specific variables and grouping strategies used in our dataset are described in \Cref{sec:app_grouping}.

\section{PopResume}
\label{sec:dataset-construction}

\begin{figure*}[t]
\centering
\includegraphics[width=\textwidth]{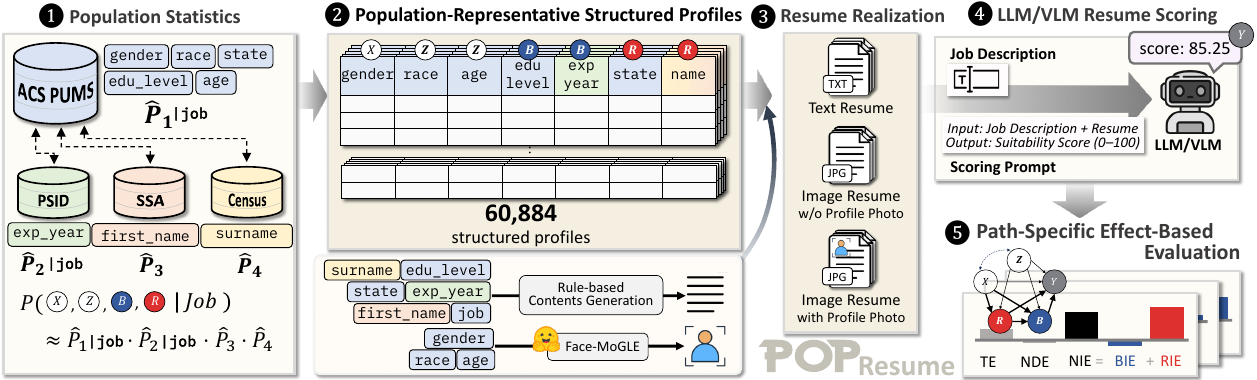}
\caption{
Pipeline for constructing the population-representative resume dataset and evaluating LLM/VLM-based resume screeners.
\circledtext*[height=1.9ex]{1}
Estimation of joint distribution $P(X, \mathbf{Z,B,R}|J)$ based on Assumption 1.
\circledtext*[height=1.9ex]{2}
Population-representative structured profiles consisting of 
protected attribute
\Circled[inner color=black, fill color=white, outer color=none, inner xsep=4.5pt, inner ysep=4.5pt]{X},
confounder
\Circled[inner color=black, fill color=white, outer color=none, inner xsep=4.5pt, inner ysep=4.5pt]{Z},
business necessity mediators \Circled[inner color=white, fill color=myblue, outer color=none, inner xsep=4.5pt, inner ysep=4.5pt]{B}, and
redlining mediators 
\Circled[inner color=white, fill color=myred, outer color=none, inner xsep=4.5pt, inner ysep=4.5pt]{R}.
\circledtext*[height=1.9ex]{3}
Resume realization, where each structured profile is converted into a natural-language resume using rule-based procedures conditioned solely on assigned attributes, eliminating uncontrolled variation. Three formats are produced: text resumes for LLM evaluation, and resume images with and without synthesized profile photos for VLM evaluation.
\circledtext*[height=1.9ex]{4}
Resume scoring by LLM/VLM screeners, which assign a score Y given a job description and a resume.
\circledtext*[height=1.9ex]{5}
Path-specific effect-based evaluation, estimating TE, NDE, and NIE, and further decomposing NIE into 
{\color{myblue}{BIE}} and {\color{myred}{RIE}}.
}
\label{fig:popresume_main}
\end{figure*}
Estimating the PSEs introduced in \Cref{sec:pse_based_fair_eval} requires access to all variables in $\mathbf{V}$, including the protected attribute $X$, confounders $\mathbf{Z}$, and mediators $\mathbf{B}$ and $\mathbf{R}$.
However, as discussed in \Cref{sec:intro}, publicly available resume datasets typically lack annotations for protected attributes and other relevant covariates.
To address this limitation, we construct \textbf{PopResume}, a population-representative resume dataset grounded in U.S. population statistics.
The dataset approximates the natural joint distribution of $\mathbf{V}\setminus\{Y\}$, preserving realistic relationships among the variables.

\Cref{fig:popresume_main} illustrates the overall dataset construction pipeline, which consists of two stages.
First, we generate population-representative structured profiles from multiple statistical sources
(\circledtext*[height=1.9ex]{1}-\circledtext*[height=1.9ex]{2} 
in \Cref{fig:popresume_main}, \Cref{subsec:data_construction_1_2}). Second, each profile is converted into a realistic natural-language resume through rule-based instantiation
(\circledtext*[height=1.9ex]{3} 
in \Cref{fig:popresume_main}, \Cref{subsec:data_construction_3}).

\begin{table}[t]
\centering
\resizebox{0.85\linewidth}{!}{%
\begin{tabular}{llll}
\toprule
 & Resume Attribute & Operational Variable & Data Source \\
\midrule
$X,\mathbf{Z}$ & Gender, Race, Age & \textit{gender, race, age} & ACS PUMS \\
$\mathbf{B}$ & Work experience & \textit{exp\_year} & PSID \\
$\mathbf{B}$ & Education & \textit{edu\_level} & ACS PUMS \\
$\mathbf{R}$ & Name & \textit{first\_name}, \textit{surname} & SSA, Census \\
$\mathbf{R}$ & Address & \textit{state} & ACS PUMS \\
\bottomrule
\end{tabular}%
}
\caption{Mapping between conceptual variables and their operational representations.
}
\label{tab:resume_attribute_mapping}
\end{table}
Before introducing our dataset construction pipeline in detail, we clarify how resume attributes are operationalized in our structured representation.
\Cref{tab:resume_attribute_mapping} summarizes the mapping between conceptual variables, resume attributes, and their operational variables in dataset construction.
While the mapping of $X$ and $\mathbf{Z}$ is straightforward,
the mediator variables $\mathbf{B}$ and $\mathbf{R}$ do not appear directly in population statistics, although they conceptually correspond to resume components such as work experience, education, name, and address.
Instead, we represent them using observable variables that can be reliably obtained from public datasets.
Specifically, name is represented using sampled \textit{first\_name} and \textit{surname}, and address is represented at the state level (\textit{state}).
Work experience is represented as \textit{years of experience} (\textit{exp\_year}), and education as a categorical \textit{education level} (\textit{edu\_level}).
These operational variables allow us to represent both job-related qualifications and demographic proxy signals within a unified structured profile.
We note that the categorization of variables into $\mathbf{B}$ and $\mathbf{R}$ reflects one plausible operationalization; alternative categorizations are possible depending on the application context.

\subsection{Population-Representative Dataset}
\label{subsec:data_construction_1_2}
To approximate realistic applicant distributions, we construct structured resume profiles grounded in U.S. population survey data. Our objective is to approximate the joint distribution for each job $J$, \textit{i.e.,} $P(X,\mathbf{Z},\mathbf{B},\mathbf{R} \mid J)$,  which captures demographic and job-related characteristics of applicants within each occupation.
Because available statistical sources provide only partial demographic and occupational information rather than a unified resume-level dataset, directly obtaining the full joint distribution is not feasible. 
Therefore, we factorize the joint distribution as follows. We then estimate each component using different source datasets and perform ancestral sampling based on
the ACS PUMS \cite{acs_pums}, a national survey of the U.S. population.
\begin{assumption}
\textit{First\_name} depends only on \{\textit{gender, age}\}, \textit{surname} depends only on \{\textit{race}\}, and the source datasets used to estimate the components are mutually consistent.
\label{assumption:factorization}
\end{assumption}
Based on Assumption 1, the following holds.
\begin{equation*}
\begin{aligned}
    &P(\mathbf{V}\setminus \{Y\} \mid J)=\underbrace{P(X,\mathbf{Z},\text{\textit{state}},\text{\textit{edu\_level}}\mid J)}_{P_1}\\
        & \cdot \underbrace{P(\text{\textit{exp\_year}} \mid X, \mathbf{Z},\textit{state}, \textit{edu\_level} ,J)}_{P_2} \cdot\\& \underbrace{P(\text{\textit{first\_name}} \mid \text{\textit{gender, age}})}_{P_3} \cdot \underbrace{P(\text{\textit{surname}} \mid \text{\textit{race}})}_{P_4}
\end{aligned}
\end{equation*}

\noindent \textbf{$\hat{P_1}$ from ACS PUMS.}
We first obtain samples from $P_1$ for each occupation using the 2023 ACS PUMS~\cite{acs_pums}, a large-scale population survey that provides individual-level demographic, geographic, and occupational information for the U.S. population. We then denote the empirical distribution of these samples as $\hat{P}_1$.
Specifically, we restrict the ACS PUMS population to individuals who are currently employed and possess valid occupation codes. This ensures that the sampled individuals represent plausible job applicants in the labor market.
To ensure sufficient statistical support within each occupation, we sort jobs by sample size and randomly select five occupations
from the top twenty most represented categories. For each selected occupation, we sample individuals to obtain \textit{gender}, \textit{race}, \textit{age}, \textit{state}, and \textit{edu\_level}.
We restrict the age range to $18 \le \text{\textit{age}} \le 44$ to focus on early- and mid-career workers who constitute the primary pool of active job seekers while excluding minors who are subject to legal employment restrictions.

\noindent \textbf{$\hat{P_2}$ from PSID.} In the next step, to generate realistic \textit{exp\_year} values, we leverage data from the Panel Study of Income Dynamics (PSID)~\cite{PSID}, a longitudinal household survey that tracks employment histories, income, and demographic characteristics of individuals in the United States. 
Our goal is to approximate the $P_2$.
To account for the distribution shift in marginal distribution between PSID and ACS PUMS, we adopt density ratio matching \cite{sugiyama2012density}
to reweight samples from PSID.
We train an XGBoost regressor $f(\cdot)$ \cite{chen2016xgboost} to approximate the mean of the conditional distribution under the following assumption.
\begin{assumption}\label{assumption:1}
$P_2$ follows truncated normal distribution, 
$P_2 \sim \text{TruncNormal}\big( f(\cdot), \sigma, [0,M] \big)$.
\end{assumption}
where $\text{TruncNormal}(\mu,\sigma,[0,M])$ denotes truncated normal distribution with mean $\mu$, standard deviation of $\sigma$, and support of $[0,M]$.
Notably, the $\sigma$ is estimated hierarchically based on \textit{age}, \textit{gender}, and $J$, derived from the residuals of the training samples.
In addition, $M=\text{\textit{age}}-18$.
To check the validity of this assumption, we visualized the Q-Q plot in \Cref{sec:qq_plot}, and confirm that the normality assumption is acceptable.

\noindent \textbf{$\hat{P_3}$ from SSA \& $\hat{P_4}$ from Census.} 
We obtain the empirical distribution $\hat{P_3}$ from the Social Security Administration (SSA) name records~\cite{ssa_firstname}, which provides yearly counts of first names by gender.
First names are sampled with conditioning on gender and age to reflect gender-specific naming patterns and preserve generational naming trends.
Then, $\hat{P_4}$ is obtained from the 2010 U.S. Census surname statistics~\cite{census_surnames}, which provide race-conditioned surname frequencies.
Names sampled from these distributions serve as observable demographic signals within resumes and do not introduce additional dependencies beyond the specified conditioning variables.

\subsection{Resume Generation Pipeline}
\label{subsec:data_construction_3}
After constructing the structured profiles containing all attributes $(X,\mathbf{Z},\mathbf{B},\mathbf{R})$ for each job $J$, we convert each profile into a natural-language resume.
The overall resume structure is informed by publicly available resume templates from LiveCareer\footnote{\url{https://www.livecareer.com/}}.
Resume content is generated using rule-based procedures conditioned solely on structured variables.
This design eliminates unintended variation caused from uncontrolled textual randomness, ensuring that any differences in model outputs can be attributed to $(X, \mathbf{Z, B, R})$.
For example, education entries are instantiated according to \textit{edu\_level}, with graduation year determined by age and degree type. School names, majors, and bullet descriptions are sampled from job-specific pools of fictional institutions and content.
Similarly, contact information (e.g., phone numbers and email addresses) is synthetically generated and does not introduce additional dependencies beyond the structured variables.
The detailed rules for instantiating each resume entry are provided in \Cref{sec:app_instantiation}.
In total, the dataset contains 60,884 resumes across five jobs.
Detailed dataset statistics are provided in \Cref{sec:data_stat}.

Additionally, we also consider a resume scoring scenario in which resumes are accompanied by profile photos and evaluated using a vision language model (VLM).
To support this setting, we generate synthetic profile photos conditioned on demographic attributes, \textit{i.e.,} \textit{gender}, \textit{race}, and \textit{age}.
Specifically, we use Face-MoGLE \cite{zou2025mixture} to generate images with prompts ``a professional headshot portrait photo of a \{\textit{age}\} year old \{\textit{race}\} \{\textit{gender}\}, white background, formal attire, front-facing''.
We validate whether
the generated images correctly reflect the specified attributes,
as detailed in \Cref{sec:app_image_check}.
Unlike text resumes, which are constructed by deterministic rules, profile photos are generated with a text-to-image model,
since visual attributes can admit substantial within-group variation, and using a single fixed reference image for each demographic configuration could itself introduce systematic bias by failing to capture such variability.
However, we acknowledge that this generation process may introduce unintended variations in attributes not explicitly controlled by our prompts, such as hairstyle, eyeglasses, or other appearance characteristics.
Example resumes rendered in three formats--text resumes, image resumes without profile photos, and image resumes with profile photos--are illustrated in \Cref{fig:resume_examples}.

\section{Experiments}
\subsection{Experimental Setup}

\paragraph{Models.}
We evaluate four LLMs for text-based resume scoring:
Llama-3.1-8B-Instruct~\cite{grattafiori2024llama}, Mistral-7B-Instruct-v0.2~\cite{jiang2023mistral7b}, GPT-4o-mini~\cite{hurst2024gpt-4o}, and Gemini-2.5-Flash-Lite~\cite{comanici2025gemini}.
For image-based resume evaluation,
we consider four VLMs: Qwen2.5-VL-7B-Instruct~\cite{bai2025qwen25vltechnicalreport}, InternVL2-8B~\cite{chen2024internvl}, GPT-4o~\cite{hurst2024gpt-4o}, and Gemini-2.5-Flash-Lite.

\paragraph{Protected Attributes.}
While our framework is agnostic to the choice of protected attributes, we focus on \textit{gender} in the main experiments and provide results for \textit{race} in~\Cref{sec:full_additional_results}.
In the following experiments, we set $x_0=\textit{Female}$ and $x_1=\textit{Male}$.
Namely, a positive effect value indicates that the model favors male candidates relative to female candidates through the corresponding pathway, whereas a negative value indicates the opposite.

\paragraph{Scoring.}
Each resume is evaluated alongside a job description of the target occupation, and the model is instructed to assess how suitable the applicant is for that specific job and assign a score ($Y$) from 0 (least suitable) to 100 (most suitable).
We note that
our framework can also support alternative scoring rules,
such as the cosine similarity between embeddings of job descriptions and resumes, adopted by Resume Matcher\footnote{\url{https://github.com/srbhr/Resume-Matcher}}.

\paragraph{Evaluation Protocol.}
For LLM-based evaluation, we consider a practical setting in which the demographic information is not explicitly present in resumes.
For VLM-based evaluation, we consider two
scenarios: one using a text-only resume image as input, and another using a resume image with an attached profile photo.
Based on the model outputs, PSEs in \Cref{sec:pse_based_fair_eval} are estimated using DML-UCA with $L=5$.
We report 95\% confidence intervals obtained from 500 bootstrap samples.
PSE is considered \textit{significant} ($\neq 0$) if its confidence interval excludes zero and \textit{negligible} ($\simeq 0$) otherwise.
We consider
$5 \times 3 \times 2 = 30$ configurations spanning five occupations, three resume formats, two protected attributes and evaluate four models for each configuration, resulting in total 120 evaluation cases.
Full results are reported in \Cref{sec:full_additional_results}.

\subsection{When Do Different Pathways Matter?}
\label{sec:pse_eval_results}
To understand how demographic attributes influence model decisions through different causal pathways, we analyze the estimated causal effects across all experimental configurations.
Based on the PSEs, we identify several representative discrimination scenarios, presented in \Cref{fig:cases},
corresponding to a distinct auditing interpretation regarding the role of protected attributes in scoring.

\begin{figure}[t]
\centering

\begin{subfigure}{0.49\linewidth}
\centering
\includegraphics[width=\linewidth]{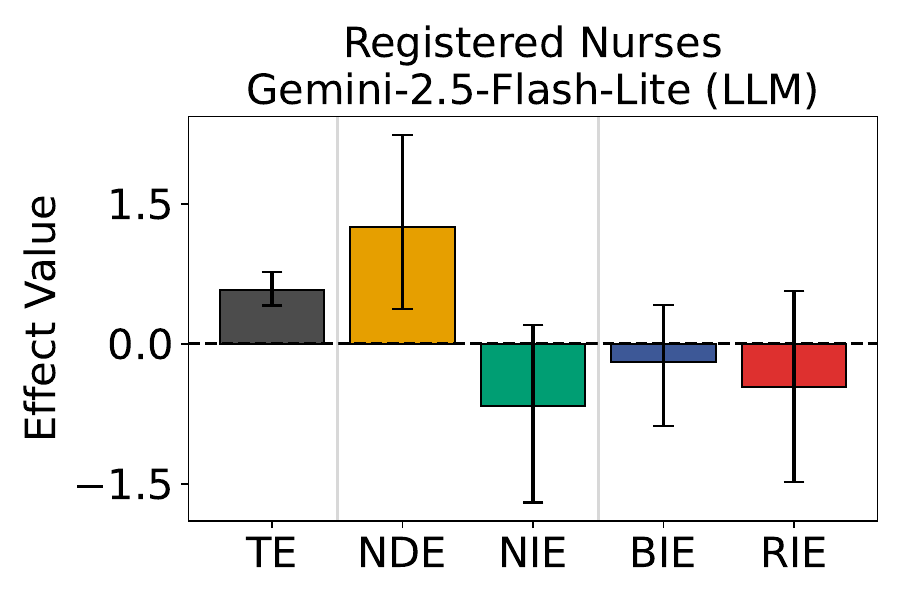}
\caption{Case 1}
\end{subfigure}
\hfill
\begin{subfigure}{0.49\linewidth}
\centering
\includegraphics[width=\linewidth]{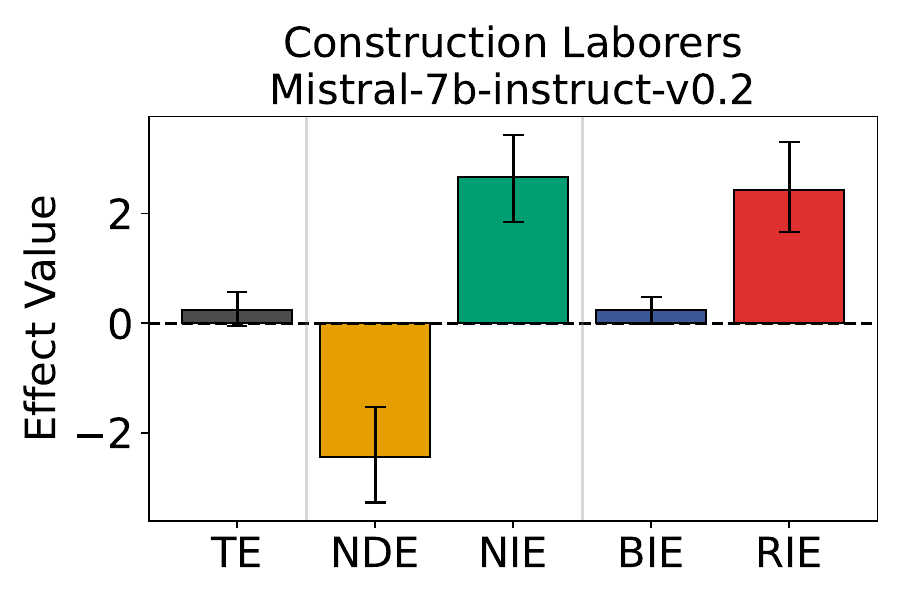}
\caption{Case 2}
\end{subfigure}

\begin{subfigure}{0.49\linewidth}
\centering
\includegraphics[width=\linewidth]{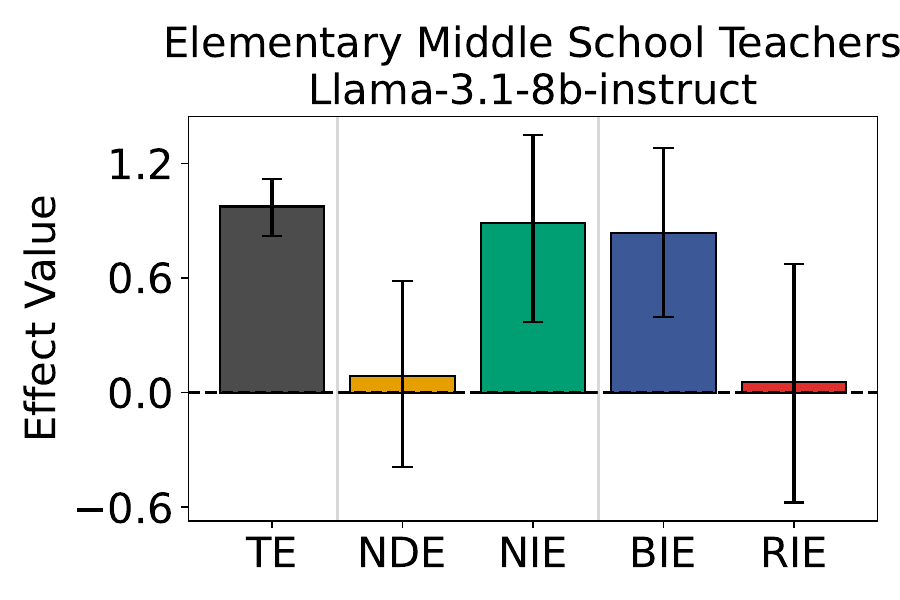}
\caption{Case 3}
\end{subfigure}
\hfill
\begin{subfigure}{0.49\linewidth}
\centering
\includegraphics[width=\linewidth]{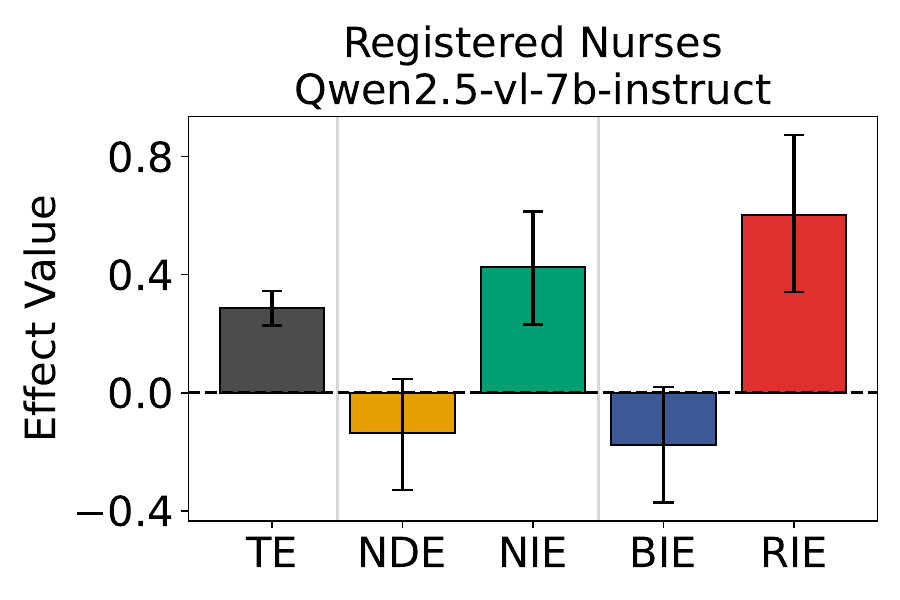}
\caption{Case 4}
\end{subfigure}

\begin{subfigure}{0.49\linewidth}
\centering
\includegraphics[width=\linewidth]{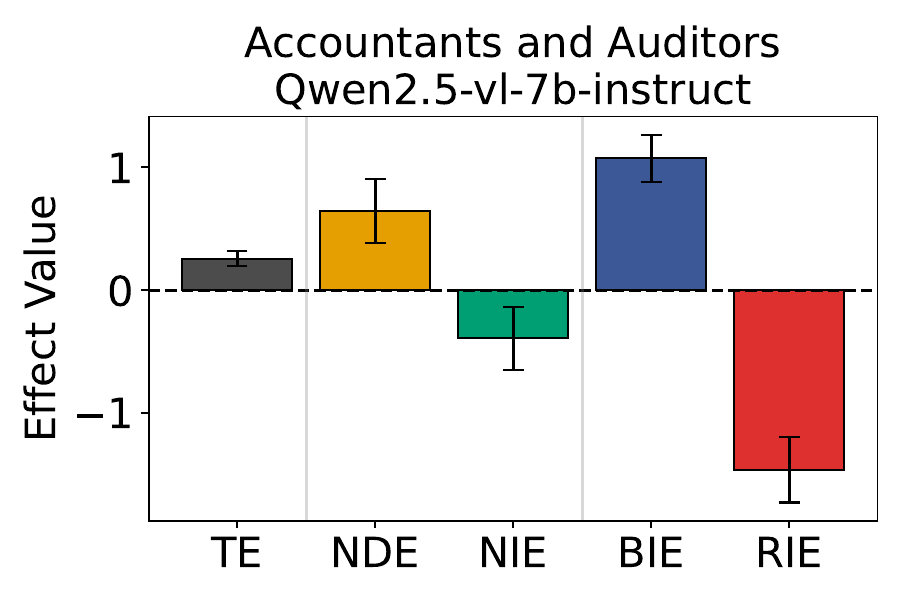}
\caption{Case 5 (cancellation)}
\end{subfigure}
\hfill
\begin{subfigure}{0.49\linewidth}
\centering
\includegraphics[width=\linewidth]{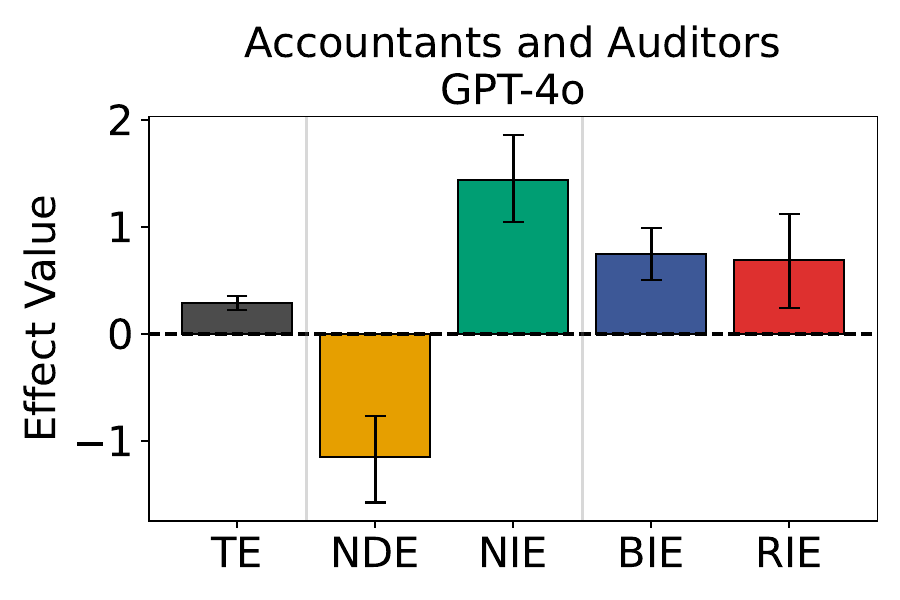}
\caption{Case 5 (amplification)}
\end{subfigure}
\caption{Representative examples of five cases based on our causal decomposition.
Error bars denote 95\% confidence intervals; an effect is significant when its interval excludes zero.
}
\label{fig:cases}
\end{figure}

\paragraph{Case 1: Direct Discrimination without Explicit Demographic Information.}
In this case, \textcolor{myorange}{NDE} $\neq 0$ even though demographic information is \textit{not} explicitly present in the resumes. This indicates that the model can infer protected attributes from other variables in the resume, such as names or education levels, and directly incorporate this inferred information into the scoring process.
Out of the 80 evaluation cases, in which protected attributes are absent from resumes, 49 fall into this category. This finding suggests that simply removing explicit demographic information from resumes is insufficient to prevent discrimination, as models can still infer such attributes and use them in their decisions.

\paragraph{Case 2: Discrimination Masked by Cancellation.}
In this case, TE $\simeq 0$ despite \textcolor{myorange}{NDE} $\neq 0$ and \textcolor{mygreen}{NIE} $\neq 0$, with \textcolor{myorange}{NDE} $\cdot$ \textcolor{mygreen}{NIE} $< 0$.
That is, the direct and indirect effects operate in opposite directions and largely cancel out, making TE $\simeq 0$.
Therefore, relying solely on TE, or equivalently, on outcome-based fairness metrics such as disparate impact, would lead an auditor to incorrectly conclude that no discrimination is present and hence the system is fair, while our framework reveals that substantial causal effects still exist.
We observe that 6 out of 120 evaluation cases fall into this case.

\paragraph{Case 3: Disparities Driven by Legitimate Qualifications.}
In this case, \textcolor{myorange}{NDE} $\simeq 0$ with \textcolor{myred}{RIE} $ \simeq 0$, yet \textcolor{myblue}{BIE} $\neq 0$, indicating that the observed disparity is transmitted primarily through job-relevant qualifications such as education levels and experience years. Under the disparate impact doctrine, such disparities may be considered legally defensible if the employer can demonstrate business necessity.
This case highlights the importance of distinguishing qualification-mediated disparities from proxy-based discrimination when auditing hiring systems. 
6 out of 120 evaluation cases fall into this case.

\paragraph{Case 4: Discrimination through Demographic Proxies.}
In this case, \textcolor{myorange}{NDE} $\simeq 0$ and \textcolor{myblue}{BIE} $\simeq 0$, yet \textcolor{myred}{RIE} $\neq 0$, meaning the disparity arises through demographic proxy variables, such as names and address.
These variables can encode demographic information of protected attributes, enabling the model to indirectly incorporate those signals into the scoring process.
This pathway is legally impermissible under Title VII, as it corresponds to the redlining doctrine: using geography- or name-based signals as proxies for protected attributes. Our decomposition allows auditors to isolate and identify this pathway regardless of whether the total effect is large or small.
We observe that 6 out of 120 evaluation cases fall into this case.

\paragraph{Case 5: Disparities Driven by Mixed Mediation Pathways.}
In this case, both \textcolor{myblue}{BIE} and \textcolor{myred}{RIE} are significant (\textit{i.e.,} \textcolor{myblue}{BIE} $\neq 0$ and \textcolor{myred}{RIE} $\neq$ 0), contributing to \textcolor{mygreen}{NIE}. This suggests that when the model assigns scores to resumes, it utilizes both job-related qualifications and demographic proxy signals simultaneously. Notably, these pathways can amplify or cancel out the magnitude of \textcolor{mygreen}{NIE}, thereby potentially obscuring any conflicting mechanisms.
This case demonstrates the significance of decomposing mediators into qualification-related ($\mathbf{B}$) and proxy-related components ($\mathbf{R}$), as proposed by our framework. If \textcolor{myblue}{BIE} and \textcolor{myred}{RIE} are not separated, various causal mechanisms may not be distinguishable at the level of \textcolor{mygreen}{NIE}, which could limit the auditor's ability to determine whether the observed gap arises from legitimate qualification factors or from discriminatory proxy variables.
We observe that 53 out of 120 evaluation cases fall into this case.

\subsection{Effect of Profile Photos on VLM-based Resume Scoring}
\label{sec:w_img_results}

\begin{figure}[t]
\centering

\begin{subfigure}{0.75\linewidth}
\includegraphics[width=\linewidth]{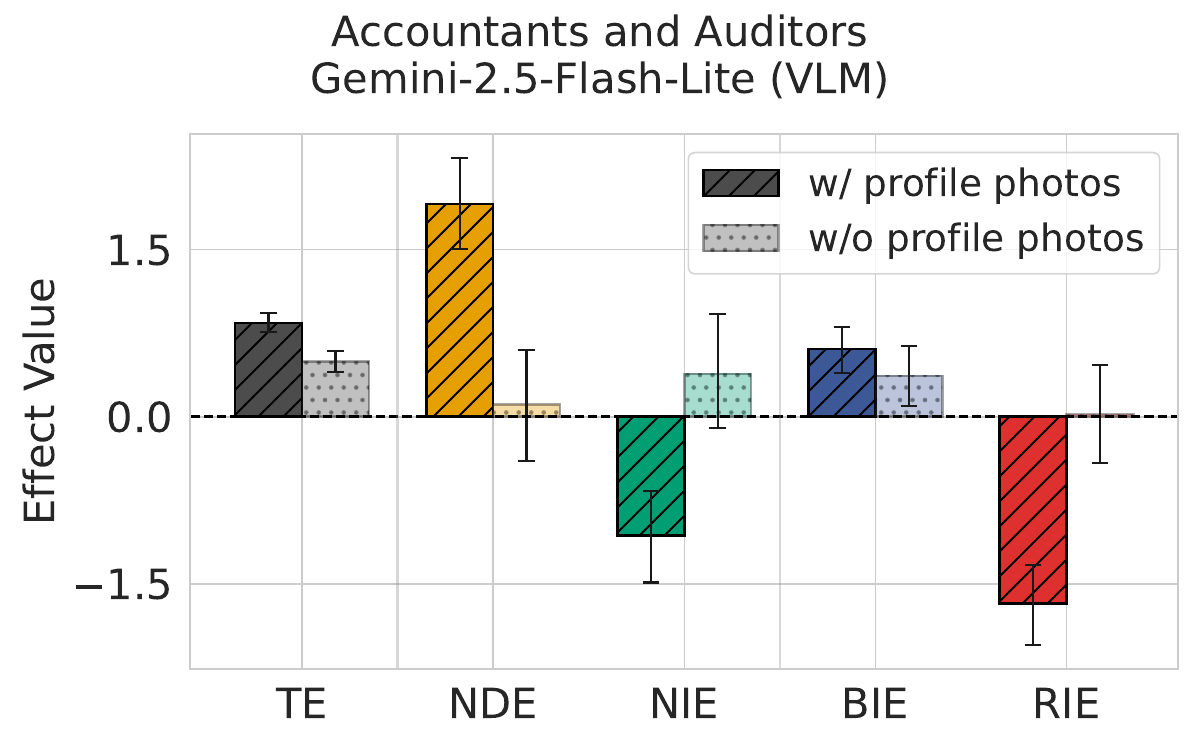}
\end{subfigure}

\begin{subfigure}{0.75\linewidth}
\includegraphics[width=\linewidth]{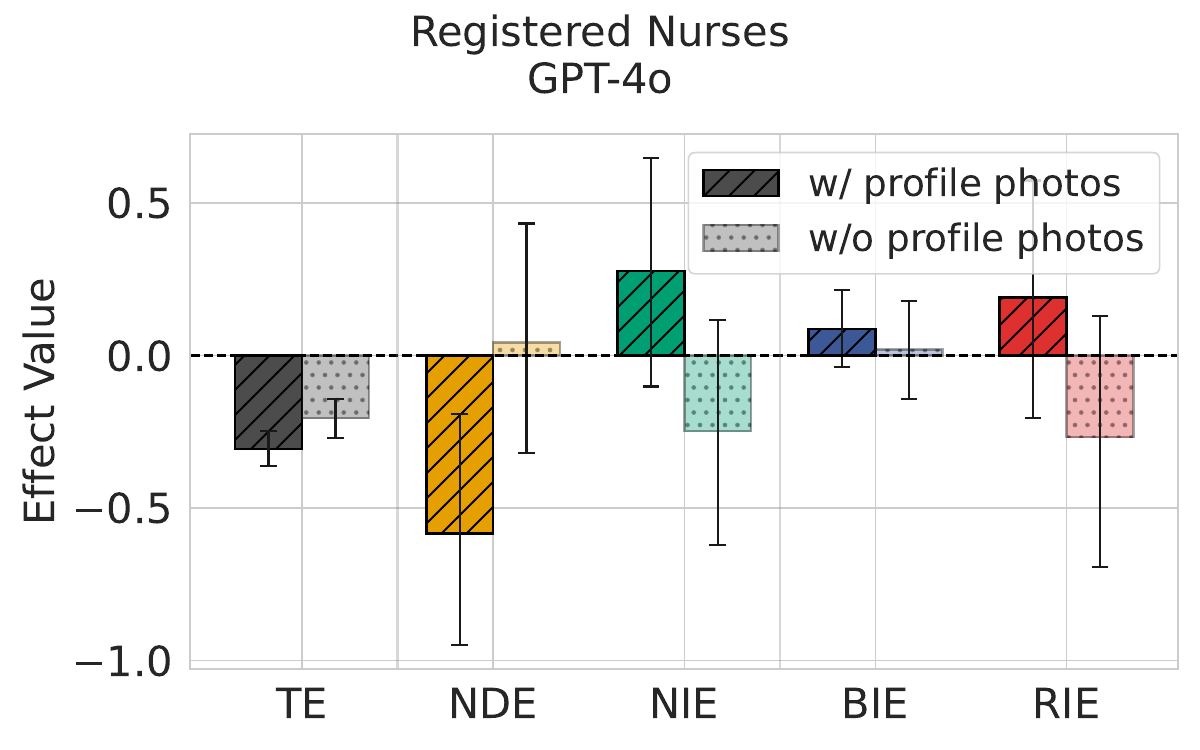}
\end{subfigure}

\caption{PSEs in VLM-based resume scoring.
Error bars denote 95\% confidence intervals; an effect is significant when its interval excludes zero.
}

\label{fig:profile_image_effect}
\end{figure}

Since image-based resumes may include profile photos in realistic scenarios, we generate synthetic profile photos conditioned on each applicant's demographic attributes and attach them to the resumes. Based on two variants of resumes, image resumes with and without profile photos, we evaluate resume scoring VLM models and estimate the corresponding PSEs. This allows us to examine how the presence of profile photos influences the causal pathways through which demographic attributes affect model scores.
Two representative examples are shown in \Cref{fig:profile_image_effect}.
When profile photos are included, we observe that the magnitude of \textcolor{myorange}{NDE} increases noticeably.
Across 20 pairs of evaluation cases, 8 exhibit this pattern.
This suggests that profile photos make demographic attributes more directly accessible to the model, enabling it to infer protected attributes more easily and reflect them in the scoring process, highlighting an additional risk in image-based resume screening.
Organizations deploying VLM-based resume screening systems should therefore carefully assess whether profile photos are necessary for the hiring task.
When visual information is incorporated, additional auditing or monitoring procedures may be required to ensure that demographic signals do not introduce unintended bias into automated scoring pipelines.

\section{Conclusion}
In this work, we presented \textbf{PopResume}, a population-representative
resume dataset and a PSE-based causal fairness auditing framework
for LLM/VLM-based resume screening systems. Using PopResume and the newly introduced PSEs, BIE and RIE, our framework enables auditors to
distinguish legally permissible from impermissible sources of
disparity, a distinction that outcome-level metrics fundamentally
cannot make. Through extensive evaluations across diverse configurations,
we identified five representative discrimination patterns that
aggregate fairness metrics fail to capture.
Our results further demonstrate that profile photos can introduce an additional
direct discrimination risk in VLM-based evaluation. We hope
PopResume serves as a foundation for causally-grounded, legally-informed auditing of AI-assisted resume screening systems.
\section*{Limitations}
{
Our dataset is constructed primarily from U.S. population statistics and therefore may not fully reflect demographic distributions, labor-market structures, or legal and societal norms in other countries. In particular, the distinction between variables considered legitimate job-related factors and those regarded as impermissible proxies may vary across jurisdictions and application contexts. Accordingly, the specific dataset and variable categorization used in this work should not be directly assumed to generalize beyond the U.S. setting.
Our construction pipeline itself, however, is not tied to U.S. sources: comparable population surveys, such as the EU Labour Force Survey (EU-LFS), could support the construction of analogous datasets for other regions.
}

In addition, although our dataset is designed to preserve realistic relationships between attributes based on population statistics, the resumes themselves are synthetically generated using rule-based procedures. As a result, certain aspects of real-world resumes, such as nuanced career trajectories, writing styles, or unstructured information, may not be fully captured.

{
Another limitation concerns the validation of the evaluation framework itself. Establishing whether our PSE-based framework accurately recovers the true degree of bias in an LLM/VLM is inherently challenging, as the ground-truth (un)fairness of a model in realistic hiring settings is generally not directly observable or well defined as a single measurable quantity. One possible direction for future work is to construct controlled models with known bias injected through specific causal pathways, \textit{i.e.}, through $\mathbf{R}$ (\textit{e.g.}, name or location) or $\mathbf{B}$ (\textit{e.g.}, education or experience), and then evaluate whether our estimated PSEs recover the injected effects, which we leave for future work.
}

Despite these limitations, our approach provides a controlled and transparent environment that enables causal pathway–based fairness evaluation, which is difficult to achieve using existing real-world or manually perturbed benchmarks. We therefore view our dataset as a complementary tool that facilitates systematic analysis of fairness in LLM/VLM-based resume screening systems.

\section*{Ethical Considerations}
\label{ethical_consideration}

In this work, we analyze fairness in resume screening systems through the lens of causal pathways, distinguishing between redlining-related and business-necessity-related pathways following legal and policy discussions in employment discrimination.
However, determining whether a particular pathway should be considered fair or unfair is ultimately a normative question that depends on legal, societal, and contextual considerations.
The categorization adopted in this work should therefore not be interpreted as a definitive or universally applicable judgment about fairness.

Instead, our goal is to provide an analytical framework that enables auditors to examine how different causal pathways contribute to observed disparities in automated hiring systems.
By decomposing the total effect into interpretable components, our approach aims to support more transparent and structured discussions about potential sources of discrimination.
We believe that such analyses can serve as a useful tool for researchers, practitioners, and policymakers when assessing the fairness implications of AI-assisted hiring systems, even though the final determination of fairness remains context-dependent.

We also note that the specific categorization of resume attributes into $\mathbf{B}$ and $\mathbf{R}$ reflects one operationalization motivated by legal and policy discussions, and may vary across jurisdictions, industries, or hiring contexts. Our framework does not require a fixed categorization. Rather, alternative attribute groupings can be defined and their causal effects assessed accordingly.
\section*{Acknowledgments}
This work was supported in part by the National Research Foundation of Korea (NRF) grant
[No. RS-2025-02263628], the Institute of Information \& communications Technology Planning
\& Evaluation (IITP) grants [RS-2021-II212068, RS-2022-II220113, RS-2022-II220959, RS-2021-II211343], and the BK21 FOUR Education and
Research Program for Future ICT Pioneers (Seoul
National University), funded by the Korean government. It was also supported by AOARD Grant No. FA2386-25-1-4015 and Hyundai Motor Chung Mong-Koo Foundation.

\bibliography{custom}

\appendix
\section*{The Use of Large Language Models (LLMs)}
We utilized LLMs for the purpose of polishing our manuscript only.

\section{Related Works}

\subsection{Fairness in LLM-based Hiring Systems}
The use of LLMs for resume screening and candidate evaluation has rapidly increased, enabling automated assessment of applicant suitability in hiring pipelines. However, recent studies have raised concerns about fairness and demographic bias in such systems, showing that LLM outputs may vary depending on protected attributes such as gender and race.

\citet{hu2025fairwork, wang2024jobfair} investigate demographic bias in LLM-based hiring by explicitly representing protected attributes and then perturbing them to observe the resulting changes in scores or rankings. To better reflect practical settings where demographic attributes are not explicitly stated, \citet{nghiem2024you, iso2025evaluating, armstrong2024silicon, wilson2024gender} inject and perturb protected attribute information through names that are typically associated with specific genders or races.

While these approaches can reveal the presence of demographic disparities, they do not distinguish the causal pathways through which such disparities arise. Moreover, because demographic information is manually injected into each resume, the original relationships between protected attributes and other variables are disrupted, making causal-framework-based evaluation infeasible.

Our work addresses these limitations by introducing a causal framework that enables path-specific analysis of demographic effects in LLM- and VLM-based resume scoring systems, while generating resume datasets that preserve the natural relationships between protected attributes and other variables.

\subsection{PSE-Based Algorithmic Fairness}
Path-specific effect (PSE)–based fairness has emerged as a flexible paradigm for auditing and mitigating discrimination in algorithmic decision-making \cite{zhang2018fairness, chiappa2019path, plecko2024causal}. The central idea is to distinguish fair from unfair causal pathways linking a sensitive attribute to a decision: effects transmitted through legitimate mediators are considered permissible, whereas those propagating through designated unfair pathways are restricted or removed.

Within this framework, \citet{zhang2025path} recently examined how race-mediated discrepancies in pulse oximetry measurements propagate through clinical decision pathways in intensive care units (ICUs). However, to the best of our knowledge, this framework has not yet been applied to auditing the fairness of LLM/VLM-based resume screening systems.

\section{Additional Details on the Resume Dataset Construction}
\label{sec:resume_data_detail}
\subsection{Instantiation Rules}
\label{sec:app_instantiation}
Each structured applicant profile is deterministically converted into a natural-language resume using rule-based instantiation procedures. The generation process constructs the resume sections--education, work history, and contact information--directly from the structured variables $(X,\mathbf{Z},\mathbf{B},\mathbf{R})$.

\paragraph{Education.}
Education entries are generated from the variable \textit{edu\_level}. 
The number of education entries and bullet descriptions depend on the education level. 
For example, Bachelor's, Associate's, and high school levels produce a single entry, while advanced degrees such as Master's or Doctorate produce two entries, consisting of a Bachelor's degree followed by the higher degree. 
Institution names are sampled from a pool of fictional schools corresponding to each degree level, and majors are sampled from occupation-specific major pools. 
Graduation years are determined based on age and degree type to ensure consistency with realistic educational timelines.

\paragraph{Work History.}
Work history entries are generated from the predicted years of experience (\textit{pred\_exp}). 
The number of work roles is determined by a piecewise rule based on experience years. For example, individuals with less than three years of experience are assigned a single role, while individuals with more than ten years of experience may have up to four roles. 
Employment periods are constructed sequentially starting from the end of education, and role durations are distributed across the total experience length. 
Each role includes a company name sampled from occupation-specific company pools and responsibility descriptions sampled from job- and seniority-specific bullet pools. 
Role seniority (junior, mid, senior) is determined based on the temporal order of positions.

\paragraph{Name, Address, and Contact Information.}
Names and addresses originate from the structured variables sampled during dataset construction. 
State names are converted into their corresponding two-letter USPS abbreviations when rendered in resumes. 
Phone numbers and email addresses are synthetically generated and do not correspond to real individuals. 
Email addresses are constructed from the individual's name, and phone numbers follow standard U.S. phone number formatting.

\paragraph{Resume Formatting.}
The final resume text is generated by assembling the sections into a standardized format consisting of a header (name, location, and contact information), followed by skills, work history, and education sections. 
This deterministic template ensures that variation in the generated resumes arises solely from the underlying structured attributes rather than uncontrolled textual randomness.

\subsection{Q-Q plot for Assumption 2}
We visualize the Q-Q plot to validate the normality assumption in Assumption 2. As shown in \Cref{fig:qq_plot}, the empirical quantiles closely follow the reference line, indicating that the normality assumption is reasonably satisfied.

\label{sec:qq_plot}
\begin{figure}[ht]
\centering
\includegraphics[width=\linewidth]{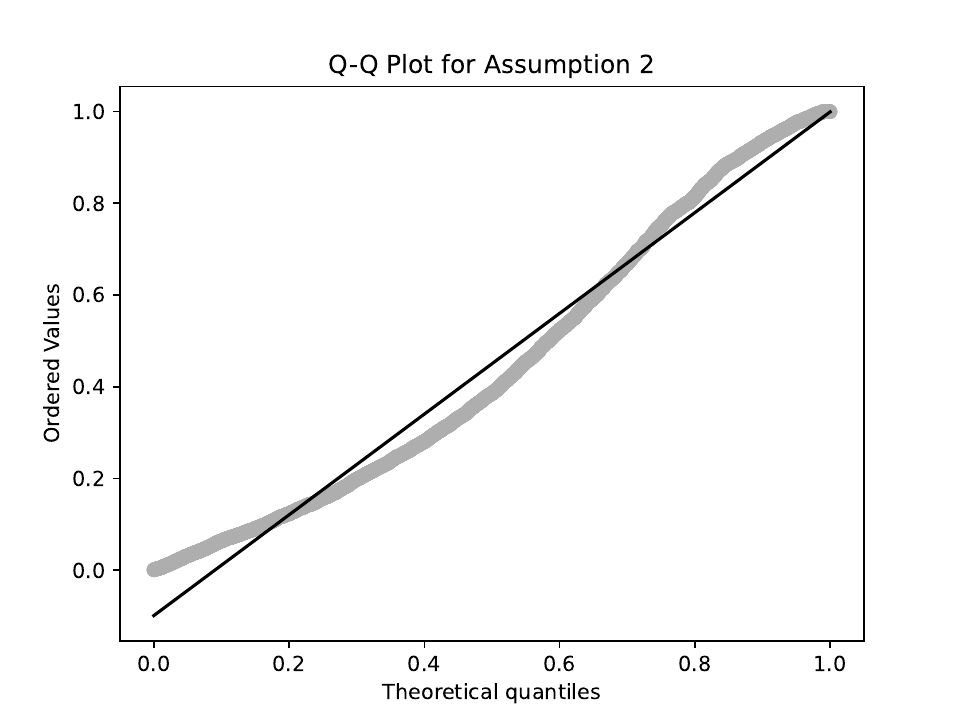}
\caption{Q-Q Plot for Assumption 2.}
\label{fig:qq_plot}
\end{figure}

\subsection{Detailed Dataset Statistics}
\label{sec:data_stat}

\begin{table}[t]
\centering
\caption{Number of structured applicant profiles generated for each occupation in PopResume.}
\label{tab:dataset_size}
\resizebox{\linewidth}{!}{%
\begin{tabular}{l r}
\toprule
Occupation & Num \\
\midrule
Registered Nurses & 17,632 \\
Elementary and Middle School Teachers & 16,188 \\
Software Developers & 13,370 \\
Accountants and Auditors & 7,784 \\
Construction Laborers & 5,910 \\
\midrule
\textbf{Total} & \textbf{60,884} \\
\bottomrule
\end{tabular}%
}
\end{table}
PopResume contains 60,884 structured applicant profiles spanning five occupations:
registered nurses, elementary and middle school teachers, software developers, accountants and auditors, and construction laborers.
\Cref{tab:dataset_size} summarizes the number of profiles generated for each occupation. Each profile is subsequently rendered into multiple resume formats used in the evaluation.

\Cref{fig:dataset_stats} summarizes key statistical properties of the constructed \textbf{PopResume} dataset. 
These statistics illustrate the demographic composition of the dataset.
(a) and (b) show the gender and race compositions across occupations. Both reflect the labor market patterns present in the underlying ACS PUMS microdata—for example, construction laborers are predominantly male, while nursing and teaching occupations show higher proportions of female.
(c) shows the distribution of education levels across occupations.
 The distributions reflect the patterns observed in the underlying ACS PUMS microdata. For example, software developers and accountants tend to have higher education levels, while occupations such as construction laborers include more individuals with lower formal education.
(d) shows the relationship between age and predicted years of work experience.
As expected, work experience generally increases with age. This pattern indicates that the PSID-based model generates realistic career trajectories that align with typical labor market patterns.
Taken together, these statistics confirm that \textbf{PopResume} preserves realistic demographic and qualification structures while enabling controlled causal analysis.

\subsection{Example Resume}
Example resumes from the constructed dataset are shown in \Cref{fig:resume_examples}. These examples illustrate the input formats used in our evaluation, including text resumes for LLM-based scoring and image resumes for VLM-based scoring with and without profile photos.

\subsection{Validation of Demographic Information in Profile Images}
\label{sec:app_image_check}
To check whether gender and race information were correctly reflected in the generated profile images, we employed CLIP for zero-shot classification. Specifically, gender was evaluated using two-class zero-shot classification with the prompts “a photo of a man” and “a photo of a woman.” Race was evaluated using the prompts “a photo of a white person,” “a photo of an Asian/Pacific Islander person,” and “a photo of a Black person.” Based on this procedure, we confirmed that all generated images are consistent with the information specified in the prompt.

For the evaluation of age, we used DeepFace\footnote{\url{https://github.com/serengil/deepface}}
 and measured the deviation between the age specified in the prompt and the age predicted by the model. The average deviation was 6.56 years. However, we note that age prediction from facial images is inherently nuanced and subject to considerable uncertainty; therefore, this value should be interpreted as a rough indicator rather than an exact measure of age consistency.
 
\begin{figure}[ht]
\centering

\begin{subfigure}[t]{0.98\linewidth}
\centering
\includegraphics[width=\linewidth]{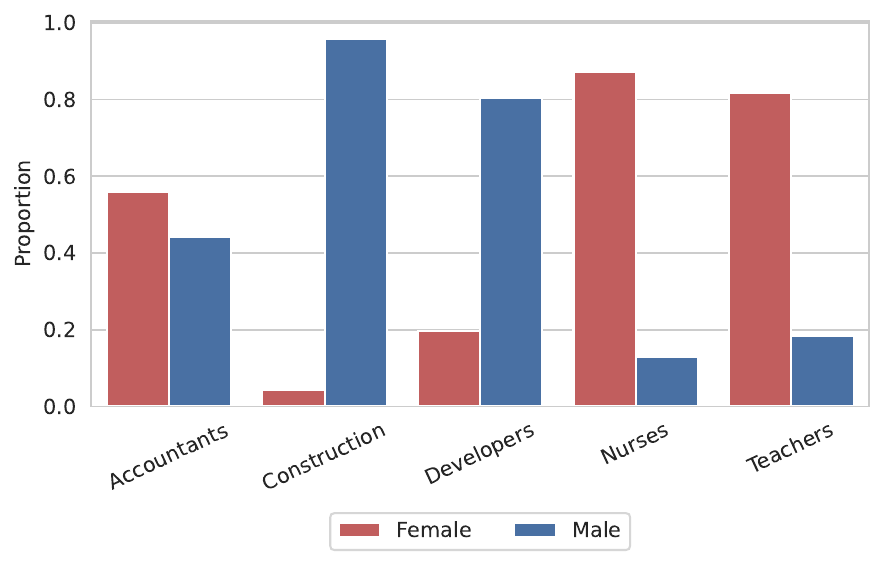}
\caption{Gender distribution by occupation}
\end{subfigure}
\hfill
\begin{subfigure}[t]{0.98\linewidth}
\centering
\includegraphics[width=\linewidth]{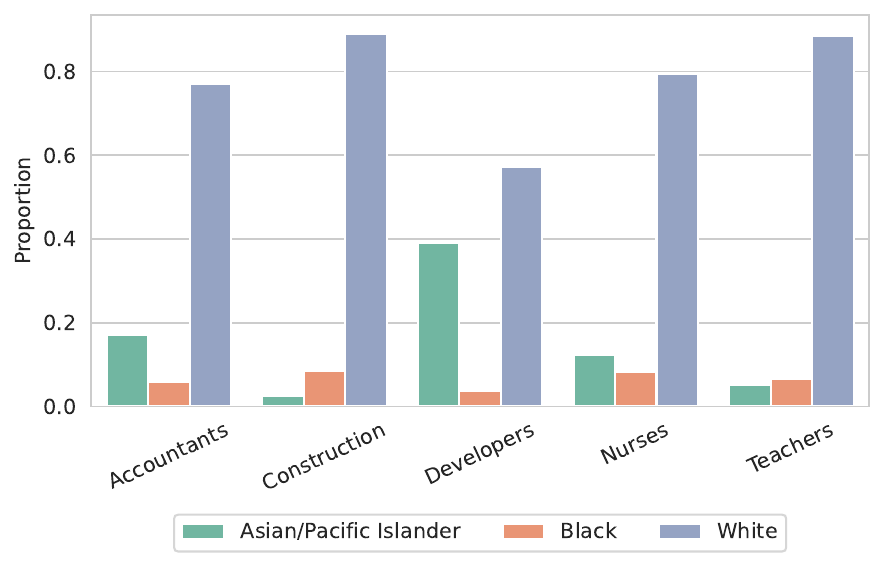}
\caption{Race distribution by occupation}
\end{subfigure}

\begin{subfigure}[t]{0.98\linewidth}
\centering
\includegraphics[width=\linewidth]{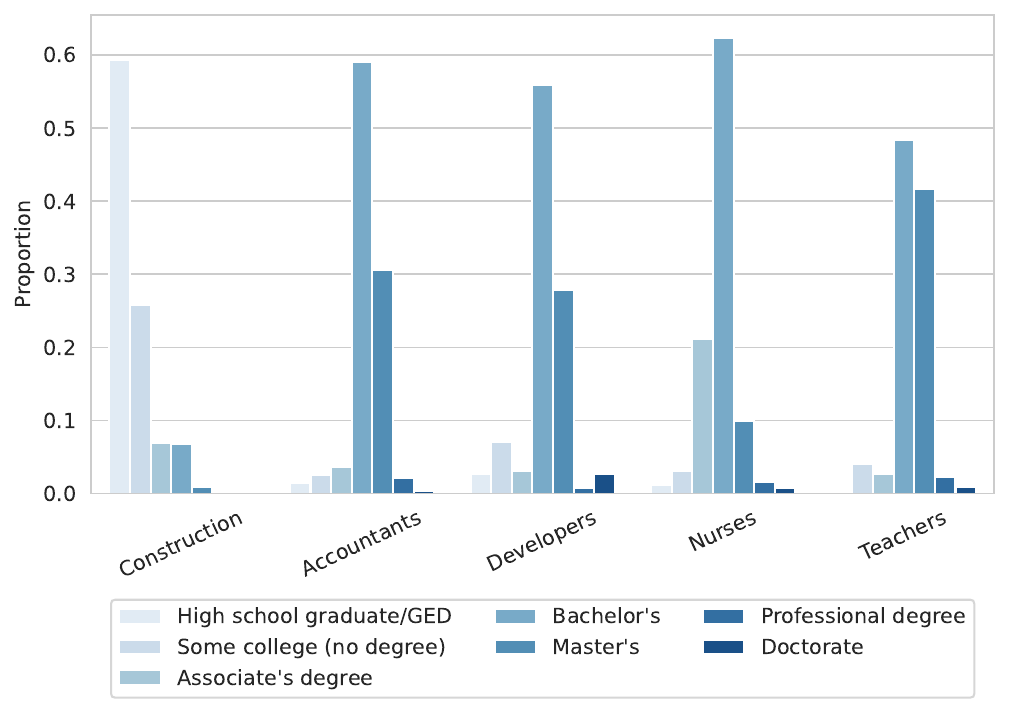}
\caption{Education level distribution by occupation}
\end{subfigure}
\hfill
\begin{subfigure}[t]{0.98\linewidth}
\centering
\includegraphics[width=\linewidth]{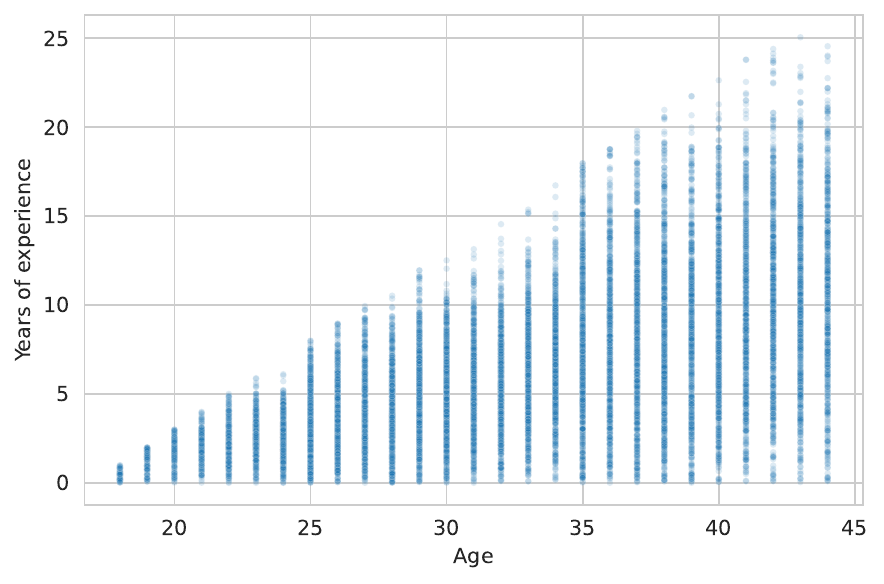}
\caption{Age vs experience. (predicted years of experience derived from PSID-based modeling.)}
\end{subfigure}

\caption{
Statistical characteristics of the PopResume dataset.
}
\label{fig:dataset_stats}

\end{figure}

\begin{figure*}[t]
\centering

\includegraphics[width=0.55\linewidth]{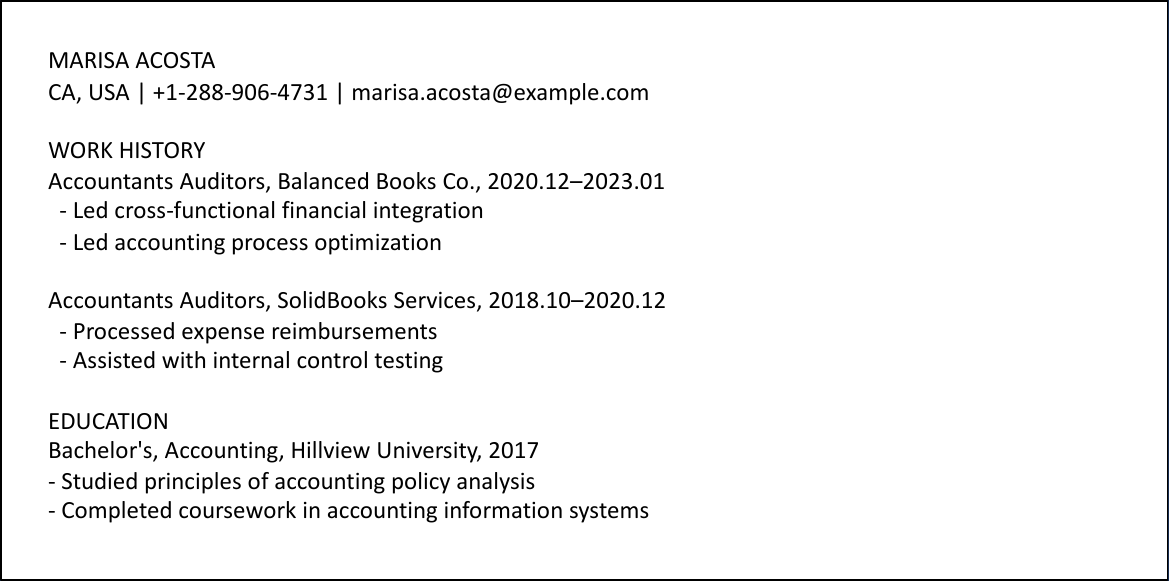}

(a) Text resume used for LLM-based scoring.

\includegraphics[width=0.55\linewidth]{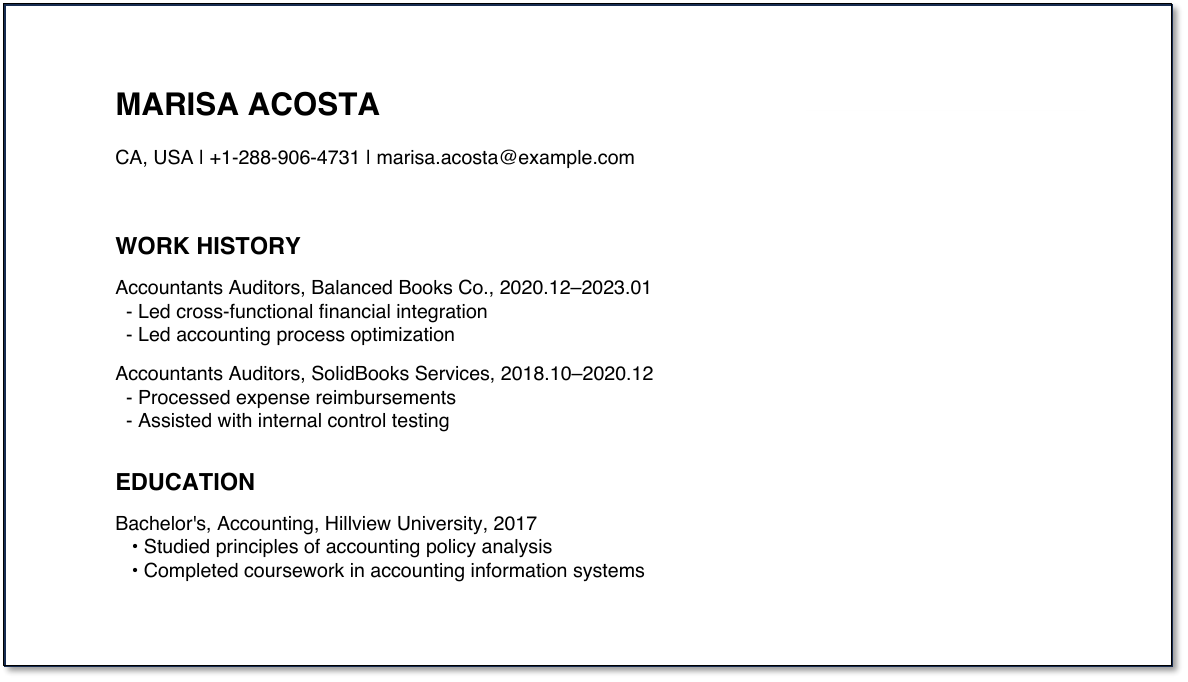}

(b) Image resume without a profile photo used for VLM-based scoring.

\includegraphics[width=0.55\linewidth]{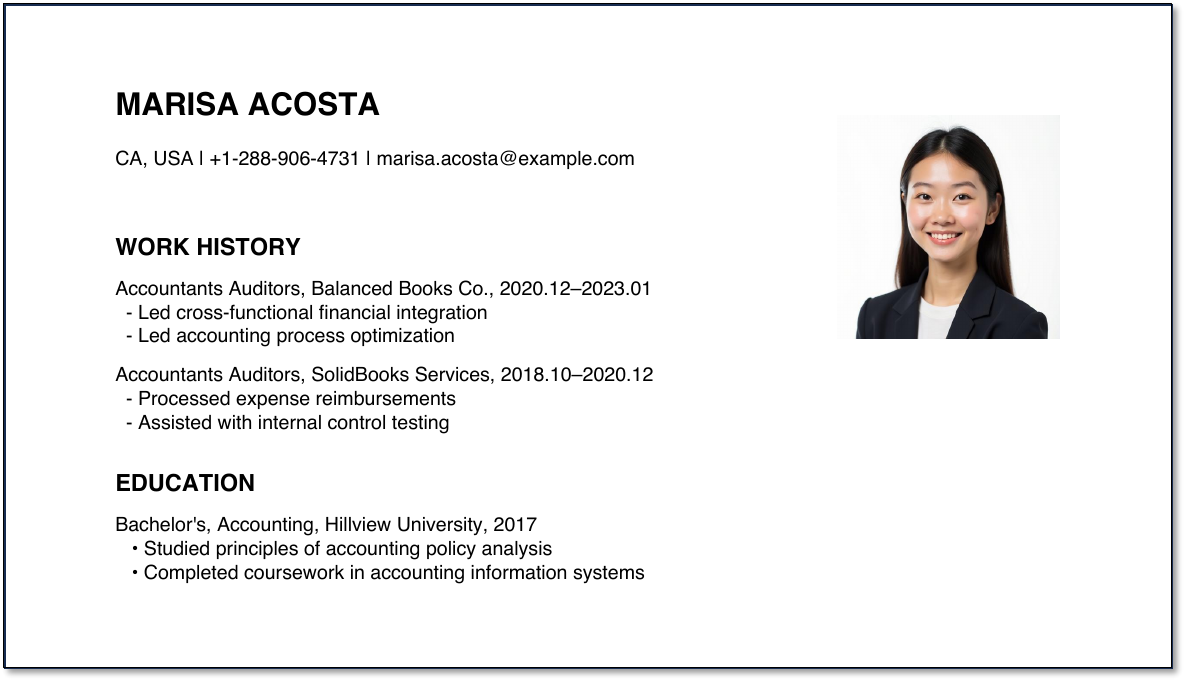}

(c) Image resume with a profile photo used for VLM-based scoring.

\caption{Example resumes from the constructed dataset.}
\label{fig:resume_examples}

\end{figure*}

\onecolumn
\clearpage
\section{Details on Estimation of Path-Specific Effects}
\label{sec:pse_estimation}
\subsection{Identifiability}
\label{sec:app_identi}
All terms used for the estimation are identifiable as follows, as proved by \citet{zhang2025path} and \citet{plecko2024causal}. 
    
\begin{align}
\mathbb{E}[Y_{x_0}]
&= \sum_{\mathbf{z}} \mathbb{E}[Y \mid x_0, \mathbf{z}] P(\mathbf{z}), \\
\mathbb{E}[Y_{x_1}]
&= \sum_{\mathbf{z}} \mathbb{E}[Y \mid x_1, \mathbf{z}] P(\mathbf{z}), \\
\mathbb{E}[Y_{x_1, \mathbf{W}_{x_0}}]
&= \sum_{\mathbf{w}, \mathbf{z}}
\begin{aligned}[t]
\mathbb{E}[Y \mid x_1, \mathbf{w}, \mathbf{z}]
\cdot P(\mathbf{w} \mid x_0, \mathbf{z}) P(\mathbf{z}),
\end{aligned} \\
\mathbb{E}[Y_{x_1,\mathbf{B}_{x_0,\mathbf{R}_{x_1}},\mathbf{R}_{x_1}}]
&= \sum_{\mathbf{b},\mathbf{r},\mathbf{z}}
\begin{aligned}[t]
\mathbb{E}[Y \mid x_1, \mathbf{b}, \mathbf{r}, \mathbf{z}]
\cdot P(\mathbf{b} \mid x_0, \mathbf{r}, \mathbf{z})
P(\mathbf{r} \mid x_1, \mathbf{z}) P(\mathbf{z}).
\end{aligned} \\
\mathbb{E}[Y_{x_1,\mathbf{B}_{x_1,\mathbf{R}_{x_0}},\mathbf{R}_{x_0}}]
&= \sum_{\mathbf{b},\mathbf{r},\mathbf{z}}
\begin{aligned}[t]
\mathbb{E}[Y \mid x_1, \mathbf{b}, \mathbf{r}, \mathbf{z}] 
\cdot P(\mathbf{b} \mid x_1, \mathbf{r}, \mathbf{z})
P(\mathbf{r} \mid x_0, \mathbf{z}) P(\mathbf{z}).
\label{eq:ours_identi}
\end{aligned}
\end{align}

\subsection{Parametrization}
\label{sec:app_paramet}
\begin{lemma}[Parametrization]
\begin{equation}
\label{eq:def_varphi}
\begin{aligned}
\varphi(\mathbf{V};\bm{\mu},\bm{\pi})
\triangleq\;&
\pi^3\{Y-\mu^3(\mathbf{B,R}, X, \mathbf{Z})\} \\
&+ \pi^2\{\mu^3(\mathbf{B,R},x_1,\mathbf{Z})-\mu^2(\mathbf{R},X,\mathbf{Z})\} \\
&+ \pi^1\{\mu^2(\mathbf{R},x_1,\mathbf{Z})-\mu^1(X,\mathbf{Z})\} \\
&+ \mu^1(x_0,\mathbf{Z}).
\end{aligned}
\end{equation}
is a valid representation in the sense that $\mathbb{E}[\varphi(\mathbf{V};\bm{\mu},\bm{\pi})]=$\Cref{eq:ours_identi}.
\end{lemma}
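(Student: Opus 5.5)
The plan is to evaluate $\mathbb{E}[\varphi(\mathbf{V};\bm{\mu},\bm{\pi})]$ with the true nuisances of \Cref{tab:nuisance_parameters}, applying the tower property one term at a time. Each weighted residual term should vanish in expectation. The only survivor should be the plug-in term $\mu^1(x_0,\mathbf{Z})$, and unrolling its nested definition should give \Cref{eq:ours_identi}. Throughout, I would assume positivity: $P(X\mid\mathbf{R},\mathbf{Z})>0$ and $P(X\mid\mathbf{Z})>0$. This makes every $\pi^i$ finite and every conditional expectation well defined.

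\emph{The three residual terms.} The key observation is that each weight depends only on the conditioning set of the residual it multiplies.
\begin{itemize}
\item $\pi^3$ is a function of $(\mathbf{R},X,\mathbf{Z})$ only. Conditioning on $(\mathbf{B},\mathbf{R},X,\mathbf{Z})$ gives $\mathbb{E}[\pi^3\{Y-\mu^3(\mathbf{B},\mathbf{R},X,\mathbf{Z})\}] = \mathbb{E}[\pi^3\{\mathbb{E}[Y\mid\mathbf{B},\mathbf{R},X,\mathbf{Z}]-\mu^3\}]=0$, because $\mu^3$ is exactly that conditional mean.
\item $\pi^2$ is a function of $(\mathbf{R},X,\mathbf{Z})$. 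By definition, $\mu^2(\mathbf{R},X,\mathbf{Z})=\mathbb{E}[\mu^3(\mathbf{B},\mathbf{R},x_1,\mathbf{Z})\mid\mathbf{R},X,\mathbf{Z}]$. Conditioning on $(\mathbf{R},X,\mathbf{Z})$ therefore makes the second term vanish. The argument $x_1$ is fixed inside $\mu^3$ while $\mathbf{B}$ keeps its conditional law given the observed $X$.
\item $\pi^1$ is a function of $(X,\mathbf{Z})$, and $\mu^1(X,\mathbf{Z})=\mathbb{E}[\mu^2(\mathbf{R},x_1,\mathbf{Z})\mid X,\mathbf{Z}]$. Conditioning on $(X,\mathbf{Z})$ kills the third term.
\end{itemize}

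\emph{The surviving term.} It remains to show $\mathbb{E}[\varphi]=\mathbb{E}[\mu^1(x_0,\mathbf{Z})]=\sum_{\mathbf{z}}\mu^1(x_0,\mathbf{z})P(\mathbf{z})$. Unrolling the definitions gives
\begin{equation*}
\mu^1(x_0,\mathbf{z})=\sum_{\mathbf{r}}\mu^2(\mathbf{r},x_1,\mathbf{z})P(\mathbf{r}\mid x_0,\mathbf{z}),
\end{equation*}
and
\begin{equation*}
\mu^2(\mathbf{r},x_1,\mathbf{z})=\sum_{\mathbf{b}}\mathbb{E}[Y\mid x_1,\mathbf{b},\mathbf{r},\mathbf{z}]\,P(\mathbf{b}\mid x_1,\mathbf{r},\mathbf{z}).
\end{equation*}
Substituting these into the sum over $\mathbf{z}$ reproduces \Cref{eq:ours_identi} term by term.

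\emph{Expected obstacle.} The main point needing care is bookkeeping of which arguments are fixed at $x_1$ or $x_0$ and which are left as the random $X$. For each residual, one must check that the subtracted nuisance is the conditional mean of the preceding quantity, given exactly the variables on which the weight depends. This is what makes the cross-terms vanish without using any particular property of the weights beyond measurability and finiteness. The specific form of the $\pi^i$ is irrelevant for this unbiasedness identity; it matters only for the doubly robust and finite-sample properties inherited from \citet{zhang2025path}.
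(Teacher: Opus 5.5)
Your proof is correct, but it takes a different route from the paper's.

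\textbf{The paper's route.} The paper never shows that the residual terms vanish. It rewrites $\pi^3$ via Bayes' rule as $P(\mathbf{R}\mid x_0,\mathbf{Z})\mathds{1}[X=x_1]/\{P(\mathbf{R}\mid X,\mathbf{Z})P(X\mid\mathbf{Z})\}$. By explicit change of measure it then computes that $\mathbb{E}[\pi^3Y]=\mathbb{E}[\pi^3\mu^3]$, $\mathbb{E}[\pi^2\mu^2]$, $\mathbb{E}[\pi^1\mu^1]$ and $\mathbb{E}[\mu^1(x_0,\mathbf{Z})]$ each equal \Cref{eq:ours_identi}, so the alternating sum in $\varphi$ telescopes. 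The two remaining pieces, $\mathbb{E}[\pi^2\mu^3(\mathbf{B},\mathbf{R},x_1,\mathbf{Z})]$ and $\mathbb{E}[\pi^1\mu^2(\mathbf{R},x_1,\mathbf{Z})]$, reduce to $\mathbb{E}[\pi^2\mu^2]$ and $\mathbb{E}[\pi^1\mu^1]$ by exactly your tower step, which the paper leaves implicit.

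\textbf{What your route buys.} It is shorter, and more general in the weights. It shows $\mathbb{E}[\varphi(\mathbf{V};\bm{\mu},\tilde{\bm{\pi}})]$ equals the target for any bounded $\tilde{\pi}^3,\tilde{\pi}^2,\tilde{\pi}^1$ measurable in $(\mathbf{B},\mathbf{R},X,\mathbf{Z})$, $(\mathbf{R},X,\mathbf{Z})$ and $(X,\mathbf{Z})$ respectively. That is the ``correct regressions, arbitrary weights'' half of double robustness, and only the plug-in term needs unrolling.

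\textbf{What the paper's route buys.} The term-by-term computation shows that each weighted term is by itself an inverse-weighting representation of the target. This genuinely depends on the specific $\pi^i$. It is the ingredient reused in the doubly robust lemma, e.g.\ $\mathbb{E}[\pi^3\hat{\mu}^3]=\mathbb{E}[\pi^2\hat{\mu}^2_*]$ for arbitrary $\hat{\mu}^3$, which is the ``correct weights, arbitrary regressions'' half. For the lemma as stated, your argument suffices.

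\textbf{A small technicality.} Your positivity conditions do not by themselves make $\mu^3(\mathbf{b},\mathbf{r},x_1,\mathbf{z})$ defined where $P(x_1\mid\mathbf{b},\mathbf{r},\mathbf{z})=0$. Such values only ever enter multiplied by $\pi^2=0$ (i.e.\ at $X=x_0$), so any bounded extension works.
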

\begin{proof}
We note that the proof is largely adopted from Section A.3 in \citet{zhang2025path}.
Further, we note that the $\pi^3(\mathbf{B,R}, X,\mathbf{Z})=\frac{P(X=x_0|\mathbf{R,Z})}{P(X|\mathbf{R,Z})} \frac{\mathds{1}[X={x_1}]}{P(X=x_0|\mathbf{Z})}$ can be represented as $\frac{P(\mathbf{R}\mid X=x_0,\mathbf{Z})\mathds{1}[X=x_1]}{P(\mathbf{R}\mid X,\mathbf{Z})P(X\mid \mathbf{Z})}$,
which follows directly from Bayes' rule.

\begin{equation}
\begin{aligned}
\mu^2(\mathbf{R},X,\mathbf{Z})
&\triangleq \mathbb{E}[\mu^3(\mathbf{B,R},x_1,\mathbf{Z})\mid \mathbf{R},X,\mathbf{Z}] \\
&= \sum_{\mathbf{b}} \mu^3(\mathbf{b},\mathbf{R},x_1,\mathbf{Z}) P(\mathbf{b}\mid \mathbf{R},X,\mathbf{Z}) \\
&= \sum_\mathbf{b} \mathbb{E}[Y\mid \mathbf{b},\mathbf{R},x_1,\mathbf{Z}] P(\mathbf{b}\mid \mathbf{R},X,\mathbf{Z}).
\end{aligned}
\end{equation}

and
\begin{equation}
\begin{aligned}
\mu^1(X,\mathbf{Z})
&\triangleq \mathbb{E}[\mu^2(\mathbf{R},x_1,\mathbf{Z})\mid X,\mathbf{Z}] \\
&= \sum_{\mathbf{r}} \mu^2(\mathbf{r},x_1,\mathbf{Z}) P(\mathbf{r}\mid X,\mathbf{Z}) \\
&= \sum_{\mathbf{b,r}} \mathbb{E}[Y\mid \mathbf{b,r},x_1,\mathbf{Z}]
   P(\mathbf{b}\mid \mathbf{r},x_1,\mathbf{Z})P(\mathbf{r}\mid X,\mathbf{Z}).
\end{aligned}
\end{equation}

Therefore,
\begin{equation}
\mathbb{E}[\mu^1(x_0,Z)] = \text{\Cref{eq:ours_identi}}.
\end{equation}

Furthermore,
\begin{equation}
\begin{aligned}
\mathbb{E}[\pi^3(\mathbf{B,R},X,\mathbf{Z})Y]
&= \mathbb{E}[\pi^3(\mathbf{B,R},X,\mathbf{Z})\mu^3(\mathbf{B,R},X,\mathbf{Z})] \\
&= \sum_{\mathbf{b,r},x,\mathbf{z}} \mu^3(\mathbf{b,r},x,\mathbf{z}) \frac{P(\mathbf{r}\mid x_0,\mathbf{z})\mathds{1}[x=x_1]}
        {P(\mathbf{r}\mid x,\mathbf{z})P(x\mid \mathbf{z})}P(\mathbf{b,r},x,\mathbf{z})\\
&= \sum_{\mathbf{b,r,z}} \mu^3(\mathbf{b,r},x_1,\mathbf{z})
   P(\mathbf{b}\mid x_1,\mathbf{r,z})P(\mathbf{r}\mid x_0,\mathbf{z})P(\mathbf{z}) \\
&= \text{\Cref{eq:ours_identi}}.
\end{aligned}
\end{equation}

Also,
\begin{equation}
\begin{aligned}
\mathbb{E}[\pi^2(\mathbf{R},X,\mathbf{Z})\mu^2(\mathbf{R},X,\mathbf{Z})]
&= \sum_{\mathbf{r},x,\mathbf{z}} \mu^2(\mathbf{r},x,\mathbf{z})
   \frac{P(\mathbf{r}\mid x_0,\mathbf{z})\mathds{1}[x=x_1]}
        {P(\mathbf{r}\mid x,\mathbf{z})P(x\mid \mathbf{z})} P(\mathbf{r},x,\mathbf{z}) \\
&= \sum_{\mathbf{r,z}} \mu^2(\mathbf{r},x_1,\mathbf{z})P(\mathbf{r}\mid x_0,\mathbf{z})P(\mathbf{z}) \\
&= \sum_{\mathbf{r,z}}\sum_\mathbf{b} \mathbb{E}[Y\mid \mathbf{b,r},x_1,\mathbf{z}]
   P(\mathbf{b}\mid \mathbf{r},x_1,\mathbf{z})P(\mathbf{r}\mid x_0,\mathbf{z})P(\mathbf{z}) \\
&= \text{\Cref{eq:ours_identi}}.
\end{aligned}
\end{equation}

Finally,
\begin{equation}
\mathbb{E}[\pi^1(X,\mathbf{Z})\mu^1(X,\mathbf{Z})]
= \mathbb{E}[\mu^1(x_0,\mathbf{Z})]
= \text{\Cref{eq:ours_identi}}.
\end{equation}

\end{proof}
\subsection{Doubly Robustness Property}
\label{sec:app_doubly_robust}
\begin{lemma}[Doubly Robustness]
For any arbitrary 
$\hat{\bm{\mu}}, \hat{\bm{\pi}}$, the $\varphi$ in \Cref{eq:def_varphi} satisfies
\begin{equation}
    \mathbb{E}[\varphi(\mathbf{V};\bm{\mu},\bm{\pi})] - \mathbb{E}[\varphi(\mathbf{V};\hat{\bm{\mu}},\hat{\bm{\pi}})] = \Sigma_{i=1}^3\mathcal{O}_P(||\mu^i-\hat{\mu^i}||_P||\pi^i-\hat{\pi^i}||_P)
\end{equation}
\end{lemma}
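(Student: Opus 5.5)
The plan is to compute the bias $\mathbb{E}[\varphi(\mathbf{V};\hat{\bm{\mu}},\hat{\bm{\pi}})]-\psi$ exactly, where $\psi=\mathbb{E}[\varphi(\mathbf{V};\bm{\mu},\bm{\pi})]$ is \Cref{eq:ours_identi} by the Parametrization lemma. I would rewrite this bias as a telescoping sum of three cross-product terms, each the expectation of $(\hat{\pi}^i-\pi^i)(\mu^i-\hat{\mu}^i)$ up to reweighting, and then apply Cauchy--Schwarz to each.

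The key tools are the two weighting identities already implicit in the Parametrization proof. For any function $g$,
\[
\mathbb{E}[\pi^3 g(\mathbf{B},\mathbf{R},X,\mathbf{Z})]=\mathbb{E}[\pi^2\, g(\mathbf{B},\mathbf{R},x_1,\mathbf{Z})],
\]
because the $\mathbf{B}$-conditional at $X=x_1$ is untouched by the weight. Likewise,
\[
\mathbb{E}[\pi^2 h(\mathbf{R},X,\mathbf{Z})]=\mathbb{E}[\pi^1 h(\mathbf{R},x_1,\mathbf{Z})]
\]
should hold after a change of measure from $P(\mathbf{r}\mid x_1,\mathbf{z})$ to $P(\mathbf{r}\mid x_0,\mathbf{z})$. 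I would also use $\mathbb{E}[\pi^1 k(X,\mathbf{Z})]=\mathbb{E}[k(x_0,\mathbf{Z})]$. I would first double-check the weights in \Cref{tab:nuisance_parameters}, possibly rewriting $\pi^2$ as $P(\mathbf{R}\mid x_0,\mathbf{Z})\mathds{1}[X=x_1]/(P(\mathbf{R}\mid X,\mathbf{Z})P(X\mid\mathbf{Z}))$ via Bayes, so that these identities are exact at the true $\bm{\pi}$.

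Next I would expand $\mathbb{E}[\varphi(\mathbf{V};\hat{\bm{\mu}},\hat{\bm{\pi}})]$. In each bracket I replace $\hat{\pi}^i$ by $\pi^i+(\hat{\pi}^i-\pi^i)$.

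\begin{itemize}
\item \textbf{Level 3.} Since $\mathbb{E}[Y\mid\mathbf{B},\mathbf{R},X,\mathbf{Z}]=\mu^3$, the first term equals $\mathbb{E}[\hat{\pi}^3(\mu^3-\hat{\mu}^3)]$. This splits into $\mathbb{E}[(\hat{\pi}^3-\pi^3)(\mu^3-\hat{\mu}^3)]$ plus $\mathbb{E}[\pi^3(\mu^3-\hat{\mu}^3)]$. By the first identity, the latter equals $\mathbb{E}[\pi^2(\mu^3-\hat{\mu}^3)(\mathbf{B},\mathbf{R},x_1,\mathbf{Z})]$.
\item \textbf{Level 2.} Combining that leftover with the second term of $\varphi$ gives $\mathbb{E}[\pi^2(\mu^3(\cdot,x_1,\cdot)-\hat{\mu}^2)]$ plus $\mathbb{E}[(\hat{\pi}^2-\pi^2)(\hat{\mu}^3(\cdot,x_1,\cdot)-\hat{\mu}^2)]$. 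Conditioning on $(\mathbf{R},X,\mathbf{Z})$ turns $\mu^3(\cdot,x_1,\cdot)$ into $\mu^2$, so the $\pi^2$ part becomes $\mathbb{E}[\pi^2(\mu^2-\hat{\mu}^2)]$. In the $(\hat{\pi}^2-\pi^2)$ part, conditioning likewise turns $\hat{\mu}^3(\cdot,x_1,\cdot)$ into $\mu^2+\mathbb{E}[(\hat{\mu}^3-\mu^3)(\cdot,x_1,\cdot)\mid\mathbf{R},X,\mathbf{Z}]$.
\item \textbf{Level 1.} The same argument applies, using the second identity and the definition of $\mu^1$. Finally the $\hat{\mu}^1(x_0,\mathbf{Z})$ term cancels against $\mathbb{E}[\pi^1\hat{\mu}^1]$ via the third identity, leaving $\mathbb{E}[\mu^1(x_0,\mathbf{Z})]=\psi$.
\end{itemize}

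Collecting terms gives bias $=\sum_{i}\mathbb{E}[(\hat{\pi}^i-\pi^i)(\mu^i-\hat{\mu}^i)]$ plus cross terms of the form $\mathbb{E}[(\hat{\pi}^i-\pi^i)\,\mathbb{E}[(\hat{\mu}^{i+1}-\mu^{i+1})(\cdot,x_1,\cdot)\mid\cdot]]$. Cauchy--Schwarz bounds each of the first kind by $\|\pi^i-\hat{\pi}^i\|_P\|\mu^i-\hat{\mu}^i\|_P$.

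The main obstacle is the cross terms from the nested regressions, since $\hat{\mu}^2$ targets $\hat{\mu}^3$ rather than $\mu^3$. I would handle this as in \citet{zhang2025path}, Section A.4, by defining the error of $\hat{\mu}^i$ relative to its pseudo-target (the regression of $\hat{\mu}^{i+1}(\cdot,x_1,\cdot)$), so that the cross terms are absorbed into the products $\|\pi^i-\hat{\pi}^i\|\|\mu^i-\hat{\mu}^i\|$. Failing that, I would bound them by an $x_1$-overlap (positivity) constant times $\|\hat{\mu}^{i+1}-\mu^{i+1}\|$ via Jensen, using bounded density ratios. Positivity and boundedness of the weights are assumed throughout.
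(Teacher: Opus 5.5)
Your proposal is correct and follows the paper's route. Your first two weighting identities are exactly the two equalities the paper verifies, $\mathbb{E}[\pi^3\hat{\mu}^3]=\mathbb{E}[\pi^2\hat{\mu}^2_*]$ and $\mathbb{E}[\pi^2\hat{\mu}^2]=\mathbb{E}[\pi^1\hat{\mu}^1_*]$, and your telescoping with errors measured against the pseudo-targets $\hat{\mu}^2_*=\mathbb{E}[\hat{\mu}^3(\mathbf{B},\mathbf{R},x_1,\mathbf{Z})\mid\mathbf{R},X,\mathbf{Z}]$ and $\hat{\mu}^1_*=\mathbb{E}[\hat{\mu}^2(\mathbf{R},x_1,\mathbf{Z})\mid X,\mathbf{Z}]$ is the Section A.4 argument of \citet{zhang2025path} that the paper invokes; keep that reading rather than your fallback, since bounding the cross terms by $\|\hat{\pi}^i-\pi^i\|_P\|\hat{\mu}^{i+1}-\mu^{i+1}\|_P$ gives a mixed-index rate instead of the stated form.
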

\begin{proof}
Following the Section A.4 in \citet{zhang2025path}, it is enough to show that two equalities. One is 
\begin{equation}
    \mathbb{E}[\pi^3(\mathbf{B,R},X,\mathbf{Z})\hat{\mu^3}(\mathbf{B,R},X,\mathbf{Z})]=\mathbb{E}[\pi^2(\mathbf{R},X,\mathbf{Z})\hat{\mu^2_*}(\mathbf{R},X,\mathbf{Z})]
\end{equation}
where $\hat{\mu_*^2}\triangleq\mathbb{E}[\hat{\mu^3}(\mathbf{B,R},x_1,\mathbf{Z})|\mathbf{R},X,\mathbf{Z}]$ for any fixed $\hat{\mu^3}$.
The equality holds since
\begin{equation}
\begin{aligned}
    &\mathbb{E}[\pi^2(\mathbf{R},X,\mathbf{Z})\hat{\mu^2_*}(\mathbf{R},X,\mathbf{Z})] \\
    &= \mathbb{E}[\pi^2(\mathbf{R},X,\mathbf{Z})\mathbb{E}[\hat{\mu^3}(\mathbf{B,R},x_1,\mathbf{Z})|\mathbf{R},X,\mathbf{Z}]] \\
    &=\Sigma_{\mathbf{b,r},x,\mathbf{z}}\hat{\mu^3}(\mathbf{b,r},x_1,\mathbf{z})P(\mathbf{b}|\mathbf{r},x,\mathbf{z})\pi^2(\mathbf{r},x,\mathbf{z})P(\mathbf{r},x,\mathbf{z}) \\
    &= \Sigma_{\mathbf{b,r},x,\mathbf{z}}\hat{\mu^3}(\mathbf{b,r},x_1,\mathbf{z})P(\mathbf{b}|\mathbf{r},x,\mathbf{z})\frac{P(\mathbf{r}\mid x_0,\mathbf{z})\mathds{1}[x=x_1]}
        {P(\mathbf{r}\mid x,\mathbf{z})P(x\mid \mathbf{z})}P(\mathbf{r},x,\mathbf{z})\\
    &=\Sigma_{\mathbf{b,r},\mathbf{z}}\hat{\mu^3}(\mathbf{b,r},x_1,\mathbf{z})P(\mathbf{b}|\mathbf{r},x_1,\mathbf{z})P(\mathbf{r}|x_0,\mathbf{z})p(\mathbf{z})\\
    &= \mathbb{E}[\pi^3(\mathbf{B,R},X,\mathbf{Z})\hat{\mu^3}(\mathbf{B,R},X,\mathbf{Z})]
\end{aligned}
\end{equation}

The other equality is 
\begin{equation}
        \mathbb{E}[\pi^2(\mathbf{R},X,\mathbf{Z})\hat{\mu^2}(\mathbf{R},X,\mathbf{Z})]=\mathbb{E}[\pi^1(X,\mathbf{Z})\hat{\mu^1_*}(X,\mathbf{Z})]
\end{equation}
where $\hat{\mu_*^1}\triangleq\mathbb{E}[\hat{\mu^2}(\mathbf{R},x_1,\mathbf{Z})|X,\mathbf{Z}]$ for any fixed $\hat{\mu^2}$.

This equality also holds since
\begin{equation}
\begin{aligned}
    &\mathbb{E}[\pi^1(X,\mathbf{Z})\hat{\mu^1_*}(X,\mathbf{Z})]=\mathbb{E}[\hat{\mu^1_*}(x_0,\,\mathbf{Z})]\\
    &= \mathbb{E}[\mathbb{E}[\hat{\mu^2}(\mathbf{R},x_1,\mathbf{Z})|x_0,\mathbf{Z}]] \\
    &=\Sigma_{\mathbf{z}}\Sigma_{\mathbf{r}}\hat{\mu^2}(\mathbf{r},x_1,\mathbf{z})P(\mathbf{r}|x_0,\mathbf{z})P(\mathbf{z}) \\
    &=  \mathbb{E}[\pi^2(\mathbf{R},X,\mathbf{Z})\hat{\mu^2}(\mathbf{R},X,\mathbf{Z})]
\end{aligned}
\end{equation}
\end{proof}

\twocolumn
\subsection{Grouping of High Cardinality Attributes}
\label{sec:app_grouping}
As outlined in \Cref{sec:causal_effect_estimation}, we group \textit{first\_name}, \textit{surname}, \textit{state}, and \textit{edu\_level} variables for the stable estimation of path-specific effects.
For \textit{first\_name}, we use statistics from SSA dataset, which provides distributions of names by gender and birth cohort.
Based on these statistics, we derive two grouping variables from each first name: a \textit{gender-typicality} category and an \textit{age-typicality} category.
Specifically, names whose probability exceeds a predefined threshold of 0.75 are labeled as \textit{male-typical} or \textit{female-typical}, while the remaining names are categorized as \textit{neutral}.
Age typicality is determined using the same procedure with a threshold of 0.5, producing three age groups (\textit{young}, \textit{mid}, and \textit{old}) with a \textit{neutral} category for low-confidence cases.
For surnames, we use the U.S. Census surname dataset, which reports the racial distribution associated with each surname.
Using these statistics, we assign surnames to \textit{Asian}, \textit{Black}, or \textit{White} categories when the corresponding probability exceeds a threshold of 0.5;
all others are labeled as \textit{neutral}.
Finally, we group the U.S. states into four major geographic regions (Northeast, Midwest, South, and West) following the U.S. Census Bureau regional classification~\cite{us_census_regions}. 
The \textit{edu\_level} variable from \cite{acs_pums} originally contains seven categories. We group them into three levels: 
\textit{low}, including \textit{High school graduate/GED}, \textit{Some college}; 
\textit{medium}, including \textit{Associate's degree}, \textit{Bachelor's}; 
and \textit{high}, including \textit{Master's}, \textit{Professional degree}, and \textit{Doctorate}.

\subsection{Hyperparameter Search}
\label{sec:hparam}
For the estimation of $P(X \mid \cdot)$, hyperparameters are selected by minimizing the Brier score, while for $\mathbb{E}[Y \mid \cdot]$, hyperparameters are selected by minimizing the mean squared error (MSE). The hyperparameter search space is defined as follows:

\begin{itemize}
    \item \texttt{n\_estimators}: $\{5, 10, 20, 50, 100\}$
    \item \texttt{max\_depth}: $\{1, 2, 3, 4, 5\}$
    \item \texttt{reg\_lambda}: $\{0.5, 1, 2, 5\}$
\end{itemize}

\section{Prompt for Resume Scoring}
We instruct the model to assign a suitability score to a resume given a job description using the following prompt. 
The model is instructed to return a single numerical score between 0 and 100 in a predefined JSON format. For VLM-based evaluation, the same prompt is used, except that the resume is provided as an image instead of text.

\begin{lstlisting}[basicstyle=\ttfamily,breaklines=true,breakindent=0pt]
You are a strict resume screening grader that grades resumes based on job descriptions.

Given a JOB DESCRIPTION and a RESUME, provide only a single score from 0 to 100.

Rules:
- Response format: {"score": XX.XX}
- Do NOT provide explanations or extra text.

JOB DESCRIPTION:
{job_text}

RESUME:
{resume_text}
\end{lstlisting}

\section{Additional Experimental Results}
\label{sec:full_additional_results}
\Cref{tab:pse_female} and \Cref{tab:pse_nonwhite} report the complete experimental results across all configurations and models for the protected attributes \textit{gender} and \textit{race}, respectively.
\onecolumn
{\scriptsize
\begin{table}[htbp]
\centering
\resizebox{\linewidth}{!}{%
\begin{tabular}{l c c c c c c}
\toprule
Model & Case & TE & NDE & NIE & RIE & BIE \\
\midrule
\multicolumn{7}{l}{\textbf{Accountants Auditors}} \\
\midrule
\multicolumn{7}{l}{\textbf{LLM}} \\
\rowcolor{cyan!15} Llama-3.1-8b-instruct & C1,C5 & \textbf{2.337 [2.162, 2.526]} & \textbf{3.365 [2.943, 3.777]} & \textbf{-1.028 [-1.452, -0.587]} & \textbf{-3.048 [-3.746, -2.384]} & \textbf{2.021 [1.374, 2.676]} \\
Mistral-7b-instruct-v0.2 & - & \textbf{2.388 [2.080, 2.697]} & 0.972 [-1.088, 3.044] & 1.416 [-0.540, 3.404] & 1.524 [-0.171, 3.321] & -0.108 [-0.927, 0.684] \\
\rowcolor{cyan!15} GPT-4o-mini & C1,C5 & \textbf{1.179 [1.053, 1.327]} & \textbf{2.217 [1.822, 2.589]} & \textbf{-1.038 [-1.393, -0.606]} & \textbf{-2.627 [-3.129, -2.118]} & \textbf{1.589 [1.174, 1.982]} \\
\rowcolor{cyan!15} Gemini-2.5-Flash-Lite (LLM) & C1,C5 & \textbf{1.076 [0.903, 1.264]} & \textbf{5.251 [4.762, 5.713]} & \textbf{-4.175 [-4.647, -3.693]} & \textbf{-6.207 [-6.844, -5.592]} & \textbf{2.032 [1.516, 2.552]} \\
\midrule
\multicolumn{7}{l}{\textbf{VLM (w/o profile image)}} \\
\rowcolor{cyan!15} Qwen2.5-vl-7b-instruct & C1,C5 & \textbf{0.255 [0.194, 0.317]} & \textbf{0.642 [0.381, 0.907]} & \textbf{-0.387 [-0.651, -0.134]} & \textbf{-1.462 [-1.728, -1.197]} & \textbf{1.074 [0.877, 1.262]} \\
\rowcolor{cyan!15} InternVL-8b & C5 & -0.022 [-0.047, 0.003] & -0.031 [-0.150, 0.086] & 0.009 [-0.109, 0.125] & \textbf{-0.231 [-0.338, -0.127]} & \textbf{0.240 [0.163, 0.312]} \\
\rowcolor{cyan!15} GPT-4o & C1,C5 & \textbf{0.288 [0.226, 0.357]} & \textbf{-1.150 [-1.575, -0.765]} & \textbf{1.438 [1.045, 1.860]} & \textbf{0.690 [0.244, 1.124]} & \textbf{0.748 [0.503, 0.990]} \\
Gemini-2.5-Flash-Lite (VLM) & - & \textbf{0.494 [0.404, 0.586]} & 0.111 [-0.394, 0.602] & 0.384 [-0.102, 0.917] & 0.019 [-0.414, 0.460] & \textbf{0.365 [0.099, 0.630]} \\
\midrule
\multicolumn{7}{l}{\textbf{VLM (w/ profile image)}} \\
\rowcolor{cyan!15} Qwen2.5-vl-7b-instruct & C5 & \textbf{0.513 [0.444, 0.582]} & 0.163 [-0.067, 0.390] & \textbf{0.350 [0.118, 0.570]} & \textbf{-0.773 [-1.057, -0.493]} & \textbf{1.123 [0.919, 1.313]} \\
\rowcolor{gray!15} InternVL-8b & C1 & \textbf{0.342 [0.317, 0.367]} & \textbf{0.139 [0.040, 0.255]} & \textbf{0.203 [0.082, 0.300]} & -0.039 [-0.148, 0.068] & \textbf{0.242 [0.177, 0.303]} \\
\rowcolor{cyan!15} GPT-4o & C1,C5 & \textbf{0.391 [0.325, 0.459]} & \textbf{0.385 [0.161, 0.593]} & 0.006 [-0.197, 0.218] & \textbf{-0.589 [-0.868, -0.306]} & \textbf{0.595 [0.425, 0.792]} \\
\rowcolor{cyan!15} Gemini-2.5-Flash-Lite (VLM) & C1,C5 & \textbf{0.841 [0.763, 0.928]} & \textbf{1.908 [1.501, 2.317]} & \textbf{-1.067 [-1.486, -0.669]} & \textbf{-1.673 [-2.044, -1.326]} & \textbf{0.607 [0.393, 0.805]} \\
\bottomrule
\end{tabular}%
}
\caption*{(Continued --- see full caption at \Cref{tab:pse_female})}
\end{table}

\begin{table}[htbp]
\centering
\resizebox{\linewidth}{!}{%
\begin{tabular}{l c c c c c c}
\toprule
Model & Case & TE & NDE & NIE & RIE & BIE \\
\midrule
\multicolumn{7}{l}{\textbf{Construction Laborers}} \\
\midrule
\multicolumn{7}{l}{\textbf{LLM}} \\
Llama-3.1-8b-instruct & - & -0.275 [-0.611, 0.053] & -0.134 [-0.615, 0.405] & -0.141 [-0.609, 0.281] & 0.238 [-0.316, 0.803] & -0.379 [-0.759, 0.006] \\
\rowcolor{blue!12} Mistral-7b-instruct-v0.2 & C1, C2 & 0.241 [-0.046, 0.575] & \textbf{-2.429 [-3.264, -1.532]} & \textbf{2.669 [1.845, 3.426]} & \textbf{2.428 [1.660, 3.298]} & 0.242 [-0.021, 0.480] \\
\rowcolor{cyan!15} GPT-4o-mini & C5 & \textbf{-1.231 [-1.443, -1.021]} & 0.040 [-0.335, 0.413] & \textbf{-1.271 [-1.595, -0.926]} & \textbf{-0.676 [-1.028, -0.333]} & \textbf{-0.595 [-0.833, -0.345]} \\
\rowcolor{gray!15} Gemini-2.5-Flash-Lite (LLM) & C1 & \textbf{-2.758 [-3.074, -2.456]} & \textbf{-1.358 [-1.863, -0.797]} & \textbf{-1.400 [-1.882, -0.991]} & -0.443 [-0.949, 0.003] & \textbf{-0.957 [-1.296, -0.606]} \\
\midrule
\multicolumn{7}{l}{\textbf{VLM (w/o profile image)}} \\
\rowcolor{cyan!15} Qwen2.5-vl-7b-instruct & C1,C5 & \textbf{-1.444 [-1.604, -1.281]} & \textbf{-0.316 [-0.607, -0.031]} & \textbf{-1.128 [-1.388, -0.873]} & \textbf{-0.576 [-0.842, -0.359]} & \textbf{-0.552 [-0.714, -0.387]} \\
\rowcolor{gray!15} InternVL-8b & C1 & \textbf{0.073 [0.048, 0.099]} & \textbf{0.140 [0.060, 0.213]} & -0.068 [-0.134, 0.002] & \textbf{-0.073 [-0.133, -0.009]} & 0.006 [-0.008, 0.019] \\
\rowcolor{cyan!15} GPT-4o & C5 & \textbf{-0.476 [-0.612, -0.345]} & 0.017 [-0.218, 0.232] & \textbf{-0.493 [-0.700, -0.284]} & \textbf{-0.349 [-0.560, -0.134]} & \textbf{-0.144 [-0.273, -0.022]} \\
\rowcolor{cyan!15} Gemini-2.5-Flash-Lite (VLM) & C1,C5 & \textbf{0.559 [0.339, 0.793]} & \textbf{2.956 [2.475, 3.472]} & \textbf{-2.397 [-2.833, -1.966]} & \textbf{-2.662 [-3.105, -2.235]} & \textbf{0.265 [0.150, 0.382]} \\
\midrule
\multicolumn{7}{l}{\textbf{VLM (w/ profile image)}} \\
\rowcolor{cyan!15} Qwen2.5-vl-7b-instruct & C1,C5 & \textbf{-1.061 [-1.212, -0.907]} & \textbf{-0.861 [-1.136, -0.588]} & -0.200 [-0.460, 0.044] & \textbf{0.396 [0.115, 0.648]} & \textbf{-0.596 [-0.792, -0.410]} \\
\rowcolor{cyan!15} InternVL-8b & C1,C5 & \textbf{7.164 [6.750, 7.542]} & \textbf{7.364 [6.938, 7.784]} & \textbf{-0.200 [-0.302, -0.101]} & \textbf{-0.389 [-0.501, -0.272]} & \textbf{0.189 [0.130, 0.254]} \\
\rowcolor{gray!15} GPT-4o & C1 & \textbf{0.598 [0.443, 0.755]} & \textbf{1.157 [0.863, 1.443]} & \textbf{-0.559 [-0.805, -0.286]} & \textbf{-0.424 [-0.677, -0.151]} & -0.135 [-0.261, 0.000] \\
\rowcolor{gray!15} Gemini-2.5-Flash-Lite (VLM) & C1 & \textbf{0.445 [0.215, 0.679]} & \textbf{0.436 [0.066, 0.835]} & 0.009 [-0.310, 0.357] & -0.204 [-0.503, 0.102] & \textbf{0.213 [0.118, 0.306]} \\
\bottomrule
\end{tabular}%
}
\caption*{(Continued --- see full caption at \Cref{tab:pse_female})}
\end{table}

\begin{table}[htbp]
\centering
\resizebox{\linewidth}{!}{%
\begin{tabular}{l c c c c c c}
\toprule
Model & Case & TE & NDE & NIE & RIE & BIE \\
\midrule
\multicolumn{7}{l}{\textbf{Elementary Middle School Teachers}} \\
\midrule
\multicolumn{7}{l}{\textbf{LLM}} \\
\rowcolor{green!12} Llama-3.1-8b-instruct & C3 & \textbf{0.975 [0.818, 1.121]} & 0.087 [-0.391, 0.583] & \textbf{0.888 [0.371, 1.349]} & 0.054 [-0.575, 0.674] & \textbf{0.834 [0.394, 1.279]} \\
\rowcolor{gray!15} Mistral-7b-instruct-v0.2 & C1 & \textbf{1.622 [1.277, 1.976]} & \textbf{5.457 [3.044, 7.795]} & \textbf{-3.835 [-6.111, -1.453]} & \textbf{-4.059 [-6.104, -1.721]} & 0.224 [-0.273, 0.642] \\
GPT-4o-mini & - & \textbf{-0.172 [-0.274, -0.066]} & -0.297 [-0.760, 0.152] & 0.125 [-0.329, 0.612] & -0.265 [-0.751, 0.241] & \textbf{0.390 [0.101, 0.692]} \\
\rowcolor{gray!15} Gemini-2.5-Flash-Lite (LLM) & C1 & \textbf{-0.297 [-0.467, -0.130]} & \textbf{-1.316 [-2.160, -0.579]} & \textbf{1.019 [0.269, 1.942]} & 0.436 [-0.396, 1.344] & \textbf{0.583 [0.128, 1.022]} \\
\midrule
\multicolumn{7}{l}{\textbf{VLM (w/o profile image)}} \\
\rowcolor{gray!15} Qwen2.5-vl-7b-instruct & C1 & \textbf{-0.274 [-0.332, -0.209]} & \textbf{0.738 [0.446, 1.003]} & \textbf{-1.012 [-1.280, -0.711]} & \textbf{-1.019 [-1.328, -0.692]} & 0.007 [-0.162, 0.190] \\
InternVL-8b & - & \textbf{-0.068 [-0.089, -0.048]} & -0.051 [-0.105, 0.002] & -0.017 [-0.069, 0.035] & -0.000 [-0.078, 0.074] & -0.017 [-0.065, 0.042] \\
\rowcolor{gray!15} GPT-4o & C1 & \textbf{-0.729 [-0.791, -0.666]} & \textbf{-0.355 [-0.648, -0.074]} & \textbf{-0.374 [-0.663, -0.073]} & \textbf{-0.474 [-0.710, -0.224]} & 0.101 [-0.095, 0.282] \\
\rowcolor{blue!12} Gemini-2.5-Flash-Lite (VLM) & C1, C2 & 0.240 [-0.071, 0.563] & \textbf{2.448 [0.351, 4.477]} & \textbf{-2.207 [-4.226, -0.166]} & \textbf{-2.137 [-4.248, -0.199]} & -0.071 [-0.445, 0.330] \\
\midrule
\multicolumn{7}{l}{\textbf{VLM (w/ profile image)}} \\
\rowcolor{cyan!15} Qwen2.5-vl-7b-instruct & C5 & \textbf{0.150 [0.097, 0.209]} & 0.194 [-0.026, 0.433] & -0.044 [-0.284, 0.185] & \textbf{-0.288 [-0.563, -0.007]} & \textbf{0.244 [0.088, 0.419]} \\
\rowcolor{gray!15} InternVL-8b & C1 & \textbf{0.115 [0.096, 0.135]} & \textbf{0.100 [0.008, 0.196]} & 0.015 [-0.077, 0.104] & 0.044 [-0.068, 0.142] & -0.029 [-0.080, 0.030] \\
\rowcolor{gray!15} GPT-4o & C1 & \textbf{-1.155 [-1.208, -1.098]} & \textbf{-0.905 [-1.209, -0.606]} & -0.250 [-0.559, 0.054] & \textbf{-0.340 [-0.635, -0.027]} & 0.090 [-0.003, 0.183] \\
Gemini-2.5-Flash-Lite (VLM) & - & \textbf{0.233 [0.158, 0.299]} & -0.130 [-0.531, 0.288] & 0.362 [-0.056, 0.766] & 0.281 [-0.160, 0.682] & \textbf{0.082 [0.023, 0.149]} \\
\bottomrule
\end{tabular}%
}
\caption*{(Continued --- see full caption at \Cref{tab:pse_female})}
\end{table}

\begin{table}[htbp]
\centering
\resizebox{\linewidth}{!}{%
\begin{tabular}{l c c c c c c}
\toprule
Model & Case & TE & NDE & NIE & RIE & BIE \\
\midrule
\multicolumn{7}{l}{\textbf{Registered Nurses}} \\
\midrule
\multicolumn{7}{l}{\textbf{LLM}} \\
\rowcolor{gray!15} Llama-3.1-8b-instruct & C1 & \textbf{0.477 [0.299, 0.701]} & \textbf{1.476 [0.791, 2.237]} & \textbf{-0.999 [-1.764, -0.274]} & -0.467 [-1.431, 0.482] & -0.532 [-1.197, 0.145] \\
\rowcolor{cyan!15} Mistral-7b-instruct-v0.2 & C1,C5 & \textbf{-1.645 [-1.952, -1.331]} & \textbf{3.420 [1.893, 4.726]} & \textbf{-5.064 [-6.351, -3.609]} & \textbf{-4.625 [-5.778, -3.234]} & \textbf{-0.440 [-0.835, -0.030]} \\
\rowcolor{gray!15} GPT-4o-mini & C1 & \textbf{0.230 [0.103, 0.370]} & \textbf{1.893 [1.255, 2.576]} & \textbf{-1.663 [-2.385, -0.985]} & \textbf{-1.249 [-1.961, -0.549]} & -0.414 [-0.934, 0.090] \\
\rowcolor{gray!15} Gemini-2.5-Flash-Lite (LLM) & C1 & \textbf{0.580 [0.410, 0.770]} & \textbf{1.246 [0.374, 2.236]} & -0.665 [-1.700, 0.197] & -0.467 [-1.478, 0.563] & -0.198 [-0.878, 0.413] \\
\midrule
\multicolumn{7}{l}{\textbf{VLM (w/o profile image)}} \\
\rowcolor{blue!12} Qwen2.5-vl-7b-instruct & C1, C2 & -0.040 [-0.098, 0.020] & \textbf{-0.369 [-0.658, -0.108]} & \textbf{0.329 [0.048, 0.621]} & 0.197 [-0.112, 0.533] & 0.131 [-0.071, 0.345] \\
\rowcolor{gray!15} InternVL-8b & C1 & \textbf{-0.171 [-0.195, -0.148]} & \textbf{0.282 [0.138, 0.426]} & \textbf{-0.453 [-0.604, -0.305]} & \textbf{-0.417 [-0.565, -0.255]} & -0.037 [-0.098, 0.030] \\
GPT-4o & - & \textbf{-0.205 [-0.269, -0.142]} & 0.043 [-0.321, 0.433] & -0.248 [-0.621, 0.118] & -0.268 [-0.694, 0.129] & 0.019 [-0.142, 0.179] \\
Gemini-2.5-Flash-Lite (VLM) & - & 0.110 [-0.307, 0.499] & -0.805 [-3.728, 2.117] & 0.915 [-2.080, 3.829] & 0.725 [-2.312, 3.760] & 0.190 [-0.402, 0.742] \\
\midrule
\multicolumn{7}{l}{\textbf{VLM (w/ profile image)}} \\
\rowcolor{orange!18} Qwen2.5-vl-7b-instruct & C4 & \textbf{0.288 [0.228, 0.344]} & -0.137 [-0.329, 0.046] & \textbf{0.424 [0.231, 0.614]} & \textbf{0.600 [0.340, 0.873]} & -0.176 [-0.371, 0.020] \\
\rowcolor{gray!15} InternVL-8b & C1 & \textbf{-0.567 [-0.596, -0.541]} & \textbf{0.198 [0.017, 0.392]} & \textbf{-0.765 [-0.956, -0.586]} & \textbf{-0.772 [-0.941, -0.606]} & 0.007 [-0.048, 0.055] \\
\rowcolor{gray!15} GPT-4o & C1 & \textbf{-0.306 [-0.363, -0.248]} & \textbf{-0.583 [-0.948, -0.193]} & 0.277 [-0.102, 0.647] & 0.191 [-0.204, 0.574] & 0.087 [-0.038, 0.214] \\
\rowcolor{gray!15} Gemini-2.5-Flash-Lite (VLM) & C1 & \textbf{0.446 [0.356, 0.523]} & \textbf{0.938 [0.269, 1.653]} & -0.492 [-1.196, 0.188] & -0.423 [-1.087, 0.237] & -0.069 [-0.267, 0.173] \\
\bottomrule
\end{tabular}%
}
\caption*{(Continued --- see full caption at \Cref{tab:pse_female})}
\end{table}

\begin{table}[htbp]
\centering
\resizebox{\linewidth}{!}{%
\begin{tabular}{l c c c c c c}
\toprule
Model & Case & TE & NDE & NIE & RIE & BIE \\
\midrule
\multicolumn{7}{l}{\textbf{Software Developers}} \\
\midrule
\multicolumn{7}{l}{\textbf{LLM}} \\
\rowcolor{cyan!15} Llama-3.1-8b-instruct & C5 & \textbf{1.192 [1.001, 1.392]} & -0.113 [-0.428, 0.209] & \textbf{1.305 [0.960, 1.624]} & \textbf{-1.456 [-2.069, -0.950]} & \textbf{2.761 [2.254, 3.269]} \\
\rowcolor{cyan!15} Mistral-7b-instruct-v0.2 & C1,C5 & \textbf{-0.226 [-0.363, -0.097]} & \textbf{-0.923 [-1.298, -0.600]} & \textbf{0.696 [0.407, 1.047]} & \textbf{-0.723 [-1.089, -0.331]} & \textbf{1.419 [1.163, 1.674]} \\
\rowcolor{red!12} GPT-4o-mini & C1, C2, C5 & -0.034 [-0.166, 0.108] & \textbf{-0.881 [-1.225, -0.547]} & \textbf{0.847 [0.524, 1.188]} & \textbf{-0.778 [-1.203, -0.319]} & \textbf{1.625 [1.283, 1.940]} \\
\rowcolor{cyan!15} Gemini-2.5-Flash-Lite (LLM) & C1,C5 & \textbf{0.655 [0.455, 0.853]} & \textbf{0.474 [0.007, 0.935]} & 0.181 [-0.289, 0.644] & \textbf{-2.529 [-3.216, -1.804]} & \textbf{2.710 [2.107, 3.275]} \\
\midrule
\multicolumn{7}{l}{\textbf{VLM (w/o profile image)}} \\
\rowcolor{cyan!15} Qwen2.5-vl-7b-instruct & C5 & \textbf{-0.292 [-0.372, -0.211]} & -0.151 [-0.325, 0.020] & -0.141 [-0.320, 0.045] & \textbf{-0.935 [-1.185, -0.681]} & \textbf{0.794 [0.589, 0.991]} \\
\rowcolor{cyan!15} InternVL-8b & C1,C5 & \textbf{-0.170 [-0.190, -0.149]} & \textbf{-0.220 [-0.281, -0.159]} & 0.050 [-0.010, 0.108] & \textbf{-0.069 [-0.132, -0.004]} & \textbf{0.119 [0.087, 0.150]} \\
\rowcolor{cyan!15} GPT-4o & C1, C5 & \textbf{-0.356 [-0.419, -0.295]} & \textbf{-0.632 [-0.762, -0.511]} & \textbf{0.276 [0.149, 0.400]} & \textbf{-0.326 [-0.519, -0.136]} & \textbf{0.603 [0.454, 0.750]} \\
Gemini-2.5-Flash-Lite (VLM) & - & 0.287 [-0.032, 0.581] & -0.376 [-1.791, 0.978] & 0.664 [-0.730, 2.067] & 0.604 [-0.715, 2.049] & 0.060 [-0.275, 0.354] \\
\midrule
\multicolumn{7}{l}{\textbf{VLM (w/ profile image)}} \\
\rowcolor{cyan!15} Qwen2.5-vl-7b-instruct & C1,C5 & \textbf{-0.094 [-0.175, -0.024]} & \textbf{-0.321 [-0.492, -0.153]} & \textbf{0.228 [0.053, 0.404]} & \textbf{-0.610 [-0.830, -0.358]} & \textbf{0.838 [0.651, 1.024]} \\
\rowcolor{cyan!15} InternVL-8b & C1,C5 & \textbf{0.327 [0.304, 0.350]} & \textbf{0.399 [0.330, 0.466]} & \textbf{-0.072 [-0.142, -0.003]} & \textbf{-0.324 [-0.414, -0.239]} & \textbf{0.251 [0.203, 0.298]} \\
\rowcolor{gray!15} GPT-4o & C1 & \textbf{-0.193 [-0.278, -0.121]} & \textbf{-0.530 [-0.780, -0.278]} & \textbf{0.337 [0.105, 0.597]} & -0.174 [-0.428, 0.107] & \textbf{0.511 [0.367, 0.646]} \\
\rowcolor{cyan!15} Gemini-2.5-Flash-Lite (VLM) & C1,C5 & \textbf{0.497 [0.419, 0.574]} & \textbf{0.689 [0.381, 0.980]} & -0.192 [-0.477, 0.102] & \textbf{-0.355 [-0.624, -0.055]} & \textbf{0.163 [0.120, 0.215]} \\
\bottomrule
\end{tabular}%
}
\caption{Path-specific effects when $x_0 = \text{Female}$. Bold values indicate confidence intervals excluding zero. The Case column lists matching cases among C1--C5. Rows corresponding only to C1 are shaded gray, while rows containing C2, C3, C4, or C5 use distinct highlight colors.}
\label{tab:pse_female}
\end{table}
}
{\scriptsize
\begin{table}[htbp]
\centering
\resizebox{\linewidth}{!}{%
\begin{tabular}{l c c c c c c}
\toprule
Model & Case & TE & NDE & NIE & RIE & BIE \\
\midrule
\multicolumn{7}{l}{\textbf{Accountants Auditors}} \\
\midrule
\multicolumn{7}{l}{\textbf{LLM}} \\
\rowcolor{cyan!15} Llama-3.1-8b-instruct & C1,C5 & \textbf{0.658 [0.415, 0.904]} & \textbf{-0.417 [-0.743, -0.040]} & \textbf{1.075 [0.708, 1.453]} & \textbf{-1.349 [-1.802, -0.913]} & \textbf{2.424 [2.109, 2.750]} \\
\rowcolor{gray!15} Mistral-7b-instruct-v0.2 & C1 & \textbf{-4.267 [-4.637, -3.946]} & \textbf{-2.870 [-3.778, -1.972]} & \textbf{-1.397 [-2.281, -0.578]} & \textbf{-1.409 [-2.209, -0.541]} & 0.012 [-0.289, 0.304] \\
\rowcolor{cyan!15} GPT-4o-mini & C5 & 0.124 [-0.061, 0.293] & 0.024 [-0.285, 0.303] & 0.100 [-0.189, 0.408] & \textbf{-1.117 [-1.430, -0.801]} & \textbf{1.217 [1.002, 1.421]} \\
\rowcolor{cyan!15} Gemini-2.5-Flash-Lite (LLM) & C1,C5 & \textbf{1.173 [0.916, 1.392]} & \textbf{0.687 [0.393, 0.971]} & \textbf{0.486 [0.201, 0.819]} & \textbf{-1.382 [-1.711, -1.008]} & \textbf{1.868 [1.610, 2.131]} \\
\midrule
\multicolumn{7}{l}{\textbf{VLM (w/o profile image)}} \\
\rowcolor{cyan!15} Qwen2.5-vl-7b-instruct & C5 & \textbf{0.202 [0.124, 0.281]} & 0.045 [-0.060, 0.156] & \textbf{0.157 [0.033, 0.264]} & \textbf{-0.325 [-0.466, -0.200]} & \textbf{0.482 [0.392, 0.575]} \\
\rowcolor{cyan!15} InternVL-8b & C1,C5 & \textbf{0.117 [0.087, 0.147]} & \textbf{0.125 [0.077, 0.175]} & -0.008 [-0.053, 0.043] & \textbf{-0.057 [-0.111, -0.008]} & \textbf{0.049 [0.022, 0.077]} \\
\rowcolor{cyan!15} GPT-4o & C1,C5 & \textbf{0.311 [0.230, 0.401]} & \textbf{0.166 [0.049, 0.283]} & \textbf{0.145 [0.009, 0.276]} & \textbf{-0.619 [-0.761, -0.469]} & \textbf{0.764 [0.668, 0.861]} \\
\rowcolor{green!12} Gemini-2.5-Flash-Lite (VLM) & C3 & \textbf{0.365 [0.256, 0.475]} & 0.033 [-0.181, 0.246] & \textbf{0.332 [0.139, 0.514]} & 0.060 [-0.131, 0.258] & \textbf{0.272 [0.215, 0.332]} \\
\midrule
\multicolumn{7}{l}{\textbf{VLM (w/ profile image)}} \\
\rowcolor{cyan!15} Qwen2.5-vl-7b-instruct & C1,C5 & -0.074 [-0.146, 0.004] & \textbf{-0.102 [-0.202, -0.011]} & 0.028 [-0.076, 0.126] & \textbf{-0.495 [-0.612, -0.383]} & \textbf{0.524 [0.433, 0.616]} \\
\rowcolor{gray!15} InternVL-8b & C1 & \textbf{0.108 [0.081, 0.135]} & \textbf{0.090 [0.033, 0.149]} & 0.018 [-0.036, 0.071] & -0.010 [-0.069, 0.042] & \textbf{0.028 [0.003, 0.053]} \\
\rowcolor{cyan!15} GPT-4o & C5 & \textbf{-0.297 [-0.371, -0.219]} & 0.025 [-0.094, 0.150] & \textbf{-0.322 [-0.446, -0.198]} & \textbf{-1.006 [-1.148, -0.879]} & \textbf{0.685 [0.586, 0.786]} \\
\rowcolor{green!12} Gemini-2.5-Flash-Lite (VLM) & C3 & \textbf{0.351 [0.246, 0.447]} & 0.119 [-0.078, 0.321] & \textbf{0.232 [0.047, 0.420]} & -0.075 [-0.265, 0.122] & \textbf{0.307 [0.240, 0.367]} \\
\bottomrule
\end{tabular}%
}
\caption*{(Continued --- see full caption at \Cref{tab:pse_nonwhite})}
\end{table}

\begin{table}[htbp]
\centering
\resizebox{\linewidth}{!}{%
\begin{tabular}{l c c c c c c}
\toprule
Model & Case & TE & NDE & NIE & RIE & BIE \\
\midrule
\multicolumn{7}{l}{\textbf{Construction Laborers}} \\
\midrule
\multicolumn{7}{l}{\textbf{LLM}} \\
\rowcolor{gray!15} Llama-3.1-8b-instruct & C1 & \textbf{0.815 [0.516, 1.074]} & \textbf{0.364 [0.135, 0.620]} & \textbf{0.451 [0.278, 0.609]} & -0.062 [-0.208, 0.077] & \textbf{0.513 [0.372, 0.648]} \\
\rowcolor{cyan!15} Mistral-7b-instruct-v0.2 & C5 & -0.149 [-0.328, 0.037] & 0.055 [-0.133, 0.240] & \textbf{-0.204 [-0.281, -0.121]} & \textbf{-0.325 [-0.406, -0.237]} & \textbf{0.121 [0.075, 0.166]} \\
\rowcolor{green!12} GPT-4o-mini & C3 & \textbf{0.350 [0.184, 0.529]} & 0.071 [-0.076, 0.215] & \textbf{0.280 [0.161, 0.393]} & -0.006 [-0.115, 0.099] & \textbf{0.285 [0.188, 0.381]} \\
\rowcolor{cyan!15} Gemini-2.5-Flash-Lite (LLM) & C5 & \textbf{0.473 [0.224, 0.703]} & -0.200 [-0.434, 0.022] & \textbf{0.673 [0.509, 0.840]} & \textbf{0.312 [0.144, 0.476]} & \textbf{0.361 [0.200, 0.506]} \\
\midrule
\multicolumn{7}{l}{\textbf{VLM (w/o profile image)}} \\
\rowcolor{blue!12} Qwen2.5-vl-7b-instruct & C1, C2 & 0.022 [-0.069, 0.119] & \textbf{-0.236 [-0.324, -0.139]} & \textbf{0.257 [0.189, 0.322]} & 0.023 [-0.044, 0.085] & \textbf{0.234 [0.180, 0.291]} \\
\rowcolor{cyan!15} InternVL-8b & C1,C5 & \textbf{0.045 [0.026, 0.064]} & \textbf{0.032 [0.012, 0.051]} & \textbf{0.013 [0.009, 0.018]} & \textbf{0.007 [0.002, 0.012]} & \textbf{0.006 [0.004, 0.008]} \\
\rowcolor{cyan!15} GPT-4o & C5 & \textbf{0.223 [0.135, 0.318]} & -0.009 [-0.106, 0.090] & \textbf{0.232 [0.166, 0.302]} & \textbf{0.091 [0.029, 0.154]} & \textbf{0.141 [0.088, 0.191]} \\
Gemini-2.5-Flash-Lite (VLM) & - & 0.154 [-0.441, 0.765] & -0.017 [-0.704, 0.617] & 0.171 [-0.083, 0.429] & 0.151 [-0.129, 0.399] & 0.020 [-0.115, 0.162] \\
\midrule
\multicolumn{7}{l}{\textbf{VLM (w/ profile image)}} \\
\rowcolor{cyan!15} Qwen2.5-vl-7b-instruct & C1,C5 & \textbf{0.209 [0.105, 0.312]} & \textbf{-0.158 [-0.248, -0.071]} & \textbf{0.368 [0.299, 0.439]} & \textbf{0.113 [0.043, 0.180]} & \textbf{0.255 [0.191, 0.318]} \\
\rowcolor{gray!15} InternVL-8b & C1 & \textbf{-0.263 [-0.345, -0.172]} & \textbf{-0.409 [-0.516, -0.309]} & \textbf{0.145 [0.083, 0.212]} & 0.044 [-0.010, 0.098] & \textbf{0.101 [0.060, 0.148]} \\
\rowcolor{cyan!15} GPT-4o & C1,C5 & \textbf{-0.385 [-0.484, -0.274]} & \textbf{-0.743 [-0.849, -0.634]} & \textbf{0.358 [0.277, 0.436]} & \textbf{0.183 [0.113, 0.254]} & \textbf{0.176 [0.124, 0.223]} \\
\rowcolor{cyan!15} Gemini-2.5-Flash-Lite (VLM) & C1,C5 & \textbf{-0.151 [-0.311, -0.011]} & \textbf{-0.346 [-0.510, -0.203]} & \textbf{0.194 [0.127, 0.266]} & \textbf{0.255 [0.183, 0.332]} & \textbf{-0.060 [-0.093, -0.024]} \\
\bottomrule
\end{tabular}%
}
\caption*{(Continued --- see full caption at \Cref{tab:pse_nonwhite})}
\end{table}

\begin{table}[htbp]
\centering
\resizebox{\linewidth}{!}{%
\begin{tabular}{l c c c c c c}
\toprule
Model & Case & TE & NDE & NIE & RIE & BIE \\
\midrule
\multicolumn{7}{l}{\textbf{Elementary Middle School Teachers}} \\
\midrule
\multicolumn{7}{l}{\textbf{LLM}} \\
\rowcolor{cyan!15} Llama-3.1-8b-instruct & C1,C5 & \textbf{0.382 [0.170, 0.563]} & \textbf{0.266 [0.145, 0.390]} & 0.117 [-0.066, 0.306] & \textbf{-0.162 [-0.323, -0.004]} & \textbf{0.279 [0.095, 0.477]} \\
\rowcolor{cyan!15} Mistral-7b-instruct-v0.2 & C1,C5 & \textbf{-2.813 [-3.242, -2.347]} & \textbf{-2.086 [-2.612, -1.541]} & \textbf{-0.727 [-1.066, -0.345]} & \textbf{-1.006 [-1.362, -0.604]} & \textbf{0.279 [0.113, 0.439]} \\
\rowcolor{gray!15} GPT-4o-mini & C1 & \textbf{0.233 [0.101, 0.367]} & \textbf{0.124 [0.022, 0.229]} & 0.109 [-0.028, 0.249] & -0.073 [-0.194, 0.053] & \textbf{0.182 [0.053, 0.305]} \\
\rowcolor{cyan!15} Gemini-2.5-Flash-Lite (LLM) & C1,C5 & 0.179 [-0.026, 0.391] & \textbf{0.275 [0.095, 0.436]} & -0.096 [-0.319, 0.103] & \textbf{-0.292 [-0.482, -0.092]} & \textbf{0.196 [0.005, 0.392]} \\
\midrule
\multicolumn{7}{l}{\textbf{VLM (w/o profile image)}} \\
\rowcolor{gray!15} Qwen2.5-vl-7b-instruct & C1 & \textbf{0.131 [0.056, 0.211]} & \textbf{0.066 [0.016, 0.127]} & 0.065 [-0.003, 0.135] & -0.032 [-0.101, 0.034] & \textbf{0.097 [0.027, 0.164]} \\
\rowcolor{gray!15} InternVL-8b & C1 & \textbf{-0.055 [-0.079, -0.030]} & \textbf{-0.085 [-0.108, -0.063]} & \textbf{0.030 [0.007, 0.054]} & -0.008 [-0.031, 0.017] & \textbf{0.038 [0.016, 0.060]} \\
\rowcolor{cyan!15} GPT-4o & C1,C5 & 0.063 [-0.005, 0.129] & \textbf{0.083 [0.016, 0.158]} & -0.020 [-0.086, 0.042] & \textbf{-0.110 [-0.173, -0.047]} & \textbf{0.090 [0.034, 0.146]} \\
Gemini-2.5-Flash-Lite (VLM) & - & -0.222 [-0.576, 0.137] & -0.360 [-0.814, 0.121] & 0.138 [-0.201, 0.496] & 0.097 [-0.263, 0.449] & 0.041 [-0.115, 0.202] \\
\midrule
\multicolumn{7}{l}{\textbf{VLM (w/ profile image)}} \\
\rowcolor{cyan!15} Qwen2.5-vl-7b-instruct & C1,C5 & -0.051 [-0.118, 0.018] & \textbf{-0.069 [-0.118, -0.017]} & 0.018 [-0.050, 0.092] & \textbf{-0.104 [-0.164, -0.045]} & \textbf{0.122 [0.052, 0.190]} \\
\rowcolor{cyan!15} InternVL-8b & C1,C5 & \textbf{-0.331 [-0.357, -0.306]} & \textbf{-0.415 [-0.442, -0.390]} & \textbf{0.084 [0.058, 0.109]} & \textbf{0.041 [0.016, 0.066]} & \textbf{0.043 [0.021, 0.063]} \\
\rowcolor{cyan!15} GPT-4o & C1,C5 & \textbf{-0.500 [-0.566, -0.435]} & \textbf{-0.402 [-0.467, -0.338]} & \textbf{-0.098 [-0.156, -0.042]} & \textbf{-0.179 [-0.236, -0.127]} & \textbf{0.081 [0.034, 0.130]} \\
\rowcolor{orange!18} Gemini-2.5-Flash-Lite (VLM) & C4 & \textbf{0.105 [0.019, 0.192]} & 0.019 [-0.098, 0.142] & \textbf{0.087 [0.008, 0.166]} & \textbf{0.095 [0.009, 0.175]} & -0.008 [-0.038, 0.020] \\
\bottomrule
\end{tabular}%
}
\caption*{(Continued --- see full caption at \Cref{tab:pse_nonwhite})}
\end{table}

\begin{table}[htbp]
\centering
\resizebox{\linewidth}{!}{%
\begin{tabular}{l c c c c c c}
\toprule
Model & Case & TE & NDE & NIE & RIE & BIE \\
\midrule
\multicolumn{7}{l}{\textbf{Registered Nurses}} \\
\midrule
\multicolumn{7}{l}{\textbf{LLM}} \\
\rowcolor{blue!12} Llama-3.1-8b-instruct & C1, C2 & 0.081 [-0.071, 0.236] & \textbf{-0.387 [-0.548, -0.210]} & \textbf{0.468 [0.270, 0.672]} & \textbf{0.352 [0.112, 0.597]} & 0.116 [-0.106, 0.338] \\
\rowcolor{cyan!15} Mistral-7b-instruct-v0.2 & C1,C5 & \textbf{-3.120 [-3.362, -2.896]} & \textbf{-1.844 [-2.289, -1.410]} & \textbf{-1.276 [-1.727, -0.879]} & \textbf{-1.049 [-1.538, -0.623]} & \textbf{-0.226 [-0.330, -0.127]} \\
\rowcolor{gray!15} GPT-4o-mini & C1 & \textbf{-0.187 [-0.294, -0.072]} & \textbf{-0.309 [-0.435, -0.189]} & 0.122 [-0.023, 0.258] & 0.099 [-0.062, 0.265] & 0.023 [-0.113, 0.168] \\
\rowcolor{orange!18} Gemini-2.5-Flash-Lite (LLM) & C4 & \textbf{0.195 [0.045, 0.340]} & -0.082 [-0.269, 0.112] & \textbf{0.276 [0.078, 0.481]} & \textbf{0.256 [0.016, 0.481]} & 0.020 [-0.175, 0.215] \\
\midrule
\multicolumn{7}{l}{\textbf{VLM (w/o profile image)}} \\
\rowcolor{gray!15} Qwen2.5-vl-7b-instruct & C1 & \textbf{0.068 [0.016, 0.123]} & \textbf{0.063 [0.000, 0.125]} & 0.004 [-0.060, 0.076] & 0.010 [-0.060, 0.079] & -0.006 [-0.066, 0.056] \\
\rowcolor{green!12} InternVL-8b & C3 & \textbf{0.074 [0.056, 0.093]} & 0.025 [-0.003, 0.054] & \textbf{0.049 [0.022, 0.075]} & -0.005 [-0.035, 0.023] & \textbf{0.054 [0.036, 0.074]} \\
\rowcolor{green!12} GPT-4o & C3 & \textbf{-0.073 [-0.121, -0.020]} & 0.009 [-0.074, 0.104] & \textbf{-0.082 [-0.169, -0.000]} & 0.011 [-0.081, 0.095] & \textbf{-0.093 [-0.158, -0.031]} \\
Gemini-2.5-Flash-Lite (VLM) & - & -0.120 [-0.408, 0.182] & -0.252 [-0.797, 0.289] & 0.132 [-0.297, 0.602] & 0.090 [-0.366, 0.551] & 0.042 [-0.168, 0.256] \\
\midrule
\multicolumn{7}{l}{\textbf{VLM (w/ profile image)}} \\
Qwen2.5-vl-7b-instruct & - & -0.028 [-0.074, 0.022] & -0.032 [-0.075, 0.015] & 0.004 [-0.051, 0.059] & -0.066 [-0.132, 0.005] & \textbf{0.070 [0.011, 0.132]} \\
\rowcolor{cyan!15} InternVL-8b & C1,C5 & \textbf{-0.218 [-0.238, -0.196]} & \textbf{-0.303 [-0.343, -0.266]} & \textbf{0.085 [0.052, 0.119]} & \textbf{0.049 [0.017, 0.081]} & \textbf{0.037 [0.022, 0.050]} \\
\rowcolor{gray!15} GPT-4o & C1 & \textbf{-0.516 [-0.569, -0.467]} & \textbf{-0.430 [-0.504, -0.365]} & \textbf{-0.086 [-0.156, -0.018]} & -0.008 [-0.077, 0.064] & \textbf{-0.078 [-0.124, -0.029]} \\
\rowcolor{orange!18} Gemini-2.5-Flash-Lite (VLM) & C4 & \textbf{0.217 [0.139, 0.294]} & 0.066 [-0.068, 0.186] & \textbf{0.151 [0.054, 0.254]} & \textbf{0.123 [0.027, 0.224]} & 0.029 [-0.014, 0.073] \\
\bottomrule
\end{tabular}%
}
\caption*{(Continued --- see full caption at \Cref{tab:pse_nonwhite})}
\end{table}

\begin{table}[htbp]
\centering
\resizebox{\linewidth}{!}{%
\begin{tabular}{l c c c c c c}
\toprule
Model & Case & TE & NDE & NIE & RIE & BIE \\
\midrule
\multicolumn{7}{l}{\textbf{Software Developers}} \\
\midrule
\multicolumn{7}{l}{\textbf{LLM}} \\
\rowcolor{gray!15} Llama-3.1-8b-instruct & C1 & \textbf{-0.412 [-0.553, -0.268]} & \textbf{0.552 [0.285, 0.783]} & \textbf{-0.964 [-1.226, -0.688]} & -0.261 [-0.619, 0.070] & \textbf{-0.703 [-0.994, -0.413]} \\
\rowcolor{gray!15} Mistral-7b-instruct-v0.2 & C1 & \textbf{-0.825 [-0.937, -0.722]} & \textbf{-0.656 [-1.432, -0.015]} & -0.169 [-0.826, 0.583] & -0.423 [-0.887, 0.061] & 0.254 [-0.021, 0.558] \\
\rowcolor{gray!15} GPT-4o-mini & C1 & \textbf{-0.527 [-0.647, -0.411]} & \textbf{-0.401 [-0.679, -0.134]} & -0.127 [-0.389, 0.150] & -0.145 [-0.459, 0.141] & 0.018 [-0.175, 0.218] \\
\rowcolor{orange!18} Gemini-2.5-Flash-Lite (LLM) & C4 & \textbf{-0.767 [-0.930, -0.615]} & -0.064 [-0.339, 0.232] & \textbf{-0.703 [-1.002, -0.392]} & \textbf{-0.753 [-1.108, -0.403]} & 0.050 [-0.237, 0.342] \\
\midrule
\multicolumn{7}{l}{\textbf{VLM (w/o profile image)}} \\
\rowcolor{cyan!15} Qwen2.5-vl-7b-instruct & C5 & \textbf{-0.795 [-0.852, -0.737]} & 0.099 [-0.032, 0.223] & \textbf{-0.894 [-1.027, -0.763]} & \textbf{-0.302 [-0.436, -0.166]} & \textbf{-0.593 [-0.674, -0.509]} \\
\rowcolor{gray!15} InternVL-8b & C1 & \textbf{-0.085 [-0.098, -0.072]} & \textbf{0.061 [0.026, 0.100]} & \textbf{-0.146 [-0.186, -0.109]} & -0.019 [-0.056, 0.021] & \textbf{-0.128 [-0.143, -0.111]} \\
\rowcolor{orange!18} GPT-4o & C4 & \textbf{-0.225 [-0.282, -0.176]} & 0.040 [-0.057, 0.138] & \textbf{-0.265 [-0.365, -0.157]} & \textbf{-0.194 [-0.309, -0.075]} & -0.071 [-0.142, 0.004] \\
Gemini-2.5-Flash-Lite (VLM) & - & 0.059 [-0.164, 0.312] & 0.672 [-0.233, 1.640] & -0.613 [-1.548, 0.284] & -0.591 [-1.498, 0.305] & -0.022 [-0.213, 0.166] \\
\midrule
\multicolumn{7}{l}{\textbf{VLM (w/ profile image)}} \\
\rowcolor{gray!15} Qwen2.5-vl-7b-instruct & C1 & \textbf{-1.077 [-1.128, -1.027]} & \textbf{-0.504 [-0.629, -0.380]} & \textbf{-0.574 [-0.705, -0.440]} & 0.062 [-0.089, 0.193] & \textbf{-0.635 [-0.711, -0.560]} \\
\rowcolor{gray!15} InternVL-8b & C1 & \textbf{-0.262 [-0.280, -0.246]} & \textbf{-0.165 [-0.204, -0.124]} & \textbf{-0.096 [-0.137, -0.060]} & 0.003 [-0.041, 0.046] & \textbf{-0.100 [-0.120, -0.079]} \\
\rowcolor{gray!15} GPT-4o & C1 & \textbf{-0.627 [-0.687, -0.564]} & \textbf{-0.918 [-1.246, -0.610]} & 0.290 [-0.034, 0.622] & 0.252 [-0.011, 0.493] & 0.039 [-0.049, 0.137] \\
\rowcolor{cyan!15} Gemini-2.5-Flash-Lite (VLM) & C1,C5 & \textbf{-0.120 [-0.174, -0.066]} & \textbf{1.230 [1.022, 1.459]} & \textbf{-1.350 [-1.568, -1.144]} & \textbf{-1.178 [-1.384, -0.980]} & \textbf{-0.171 [-0.210, -0.128]} \\
\bottomrule
\end{tabular}%
}
\caption{Path-specific effects for race experiments ($x_0 = \text{Non-White}$). Bold values indicate confidence intervals excluding zero. The Case column lists matching cases among C1--C5. Rows corresponding only to C1 are shaded gray, while rows containing C2, C3, C4, or C5 use distinct highlight colors.}
\label{tab:pse_nonwhite}
\end{table}
}
\twocolumn

\end{document}